\DeclareMathAlphabet{\mathbbm}{U}{bbm}{m}{n}
\def\rrr#1\\{\par
\medskip\hbox{\vbox{\parindent=2em\hsize=6.12in
\hangindent=4em\hangafter=1#1}}}
\newmdtheoremenv[innertopmargin=-2pt]{research}{Research direction}
\newcommand{\minimize}{\operatornamewithlimits{minimize}}
\newtheorem{theorem}{{\bf Theorem}}
\newtheorem{definition}{{\bf Definition}}
\newtheorem{proposition}{{\bf Proposition}}
\definecolor{maincolor}{HTML}{032F99}
\definecolor{blue}{RGB}{31,64,122}
\definecolor{red}{HTML}{e05a87}
\title{Optimal Allocation of Privacy Budget \\ on Hierarchical Data Release}
\date{}
\newif\ifuniqueAffiliation
\author{%
  Joonhyuk Ko \\
  University of Virginia \\
  \texttt{tah3af@virginia.edu} \\
  \And
  Juba Ziani \\
  Georgia Institute of Technology \\
  \texttt{jziani3@gatech.edu} \\
  \And
  Ferdinando Fioretto \\
  University of Virginia \\
  \texttt{fioretto@virginia.edu} \\
}
\newbox{\orcid}\sbox{\orcid}{\includegraphics[scale=0.06]{orcid.pdf}} 
\author[1]{%
	\href{https://orcid.org/0000-0000-0000-0000}{\usebox{\orcid}\hspace{1mm}David S.~Hippocampus\thanks{\texttt{hippo@cs.cranberry-lemon.edu}}}%
}
\author[1,2]{%
	\href{https://orcid.org/0000-0000-0000-0000}{\usebox{\orcid}\hspace{1mm}Elias D.~Striatum\thanks{\texttt{stariate@ee.mount-sheikh.edu}}}%
}
\affil[1]{Department of Computer Science, Cranberry-Lemon University, Pittsburgh, PA 15213}
\affil[2]{Department of Electrical Engineering, Mount-Sheikh University, Santa Narimana, Levand}
\begin{document}
\maketitle

\begin{abstract}
	Releasing useful information from datasets with hierarchical structures while preserving individual privacy presents a significant challenge. Standard privacy-preserving mechanisms, and in particular Differential Privacy, often require careful allocation of a finite privacy budget across different levels and components of the hierarchy. Sub-optimal allocation can lead to either excessive noise, rendering the data useless, or to insufficient protections for sensitive information. This paper \emph{addresses the critical problem of optimal privacy budget allocation for hierarchical data release}. It formulates this challenge as a constrained optimization problem, aiming to maximize data utility subject to a total privacy budget while considering the inherent trade-offs between data granularity and privacy loss. The proposed approach is supported by theoretical analysis and validated through comprehensive experiments on real hierarchical datasets. These experiments demonstrate that optimal privacy budget allocation significantly enhances the utility of the released data and improves the performance of downstream tasks.
\end{abstract}


\section{Introduction}\label{intro}

Hierarchical data structures are ubiquitous, appearing in domains such as geographical information systems, organizational charts, biological taxonomies, and product catalogs. While the analysis of such data impacts society and the economy as it touches disciplines from scientific discovery to policy-making, it is often regulated by privacy laws or ethical considerations. For instance, the U.S. Census Bureau's data is a prime example of hierarchical data, where population statistics are organized at various levels, such as national, state, and county. This data is crucial for a wide range of applications, including resource allocation, migration studies, and public health decisions. Its release is also governed by strict privacy regulations, including the Title 13 of the U.S.~Code, which prohibits the disclosure of personally identifiable information. As a result, privacy-preserving techniques have been developed to ensure that sensitive information remains confidential while still allowing for meaningful analysis. Among these, Differential Privacy (DP) \cite{DP_def} has emerged as the gold standard, primarily due to its strong, quantifiable guarantees against re-identification and inference attacks.

However, hierarchical data presents unique challenges for differential privacy, as its privacy budget must be allocated across different levels of the hierarchy. 
For instance, one might need to release aggregate statistics at a coarse geographical level (e.g., state) and more fine-grained statistics at a lower level (e.g., county or city). A naive or uniform allocation of the privacy budget across these levels can lead to suboptimal outcomes: either too much noise is added to fine-grained data, destroying its utility, or coarse-grained data is overly protected at the expense of detail elsewhere. This necessitates a principled approach to privacy budget allocation that explicitly considers the structure of the data and the utility requirements at different granularities.

This observation raises a fundamental question: {\em How can we optimally allocate a given privacy budget across a hierarchical data structure to maximize the utility of the released information?} The answer to this question is not straightforward, as it involves balancing the trade-off between privacy and utility across multiple levels of the hierarchy. As the paper will show, commonly adopted heuristic approaches often achieve suboptimal results, leading to inaccurate downstream analyses. 
This paper precisely addresses this problem, and provides the following contributions:
\begin{enumerate}[label={$\bm{C}_\arabic*$}, leftmargin=*, itemsep=0pt, parsep=0pt, topsep=-2pt]
    \item It defines and characterizes the problem of optimal privacy budget allocation for hierarchical data release under differential privacy. This includes defining appropriate utility metrics that capture the usefulness of data at different levels of the hierarchy.
    \item It provides an analysis of the properties of the optimal allocation strategy (as detailed in Section~\ref{Opt_Alloc_Section}), in particular the relationship between bias and variance of the released data, and how these are affected by the privacy budget allocation. 
    \item Crucially, the analysis shows that the optimal allocation strategy is not necessarily uniform across levels, and utility can be significantly improved by accounting for the hierarchy's structure. To address this, it proposes an optimization-based approach that minimizes an analytical expression for mean squared error, combining bias and variance introduced by differential privacy under non-negativity post-processing, commonly used in histogram and contingency table releases. Importantly, by minimizing this objective, the method improves the utility of the released data even after the post-processing, which is known to distort bias and variance~\cite{ZHF:aaai21}.
    \item Finally, the paper demonstrates the practical utility of the proposed method through extensive experiments on real-world datasets derived from the U.S.~Census. The results show a substantial reduction of \textbf{both} bias and variance compared to uniform allocation under the same privacy budget. 
    Furthermore, motivated by U.S.~Census budget allocation, the framework is extended to a downstream resource allocation task using privatized data. This extension examines how different classes of preference functions influence bias and variance, with results consistently showing higher utility for the optimal allocation strategy over uniform allocation.
\end{enumerate}
The remainder of this paper is organized as follows. Section~\ref{Prelim} introduces essential preliminaries on differential privacy and hierarchical data, and sets up the problem. Section~\ref{Opt_Alloc_Section} details our proposed allocation methodology.
Section~\ref{Exp_Section} presents the experimental results, focusing on Wyoming as a running example, with additional state experiments in Appendix~\ref{additional_experiments_State}.
Section~\ref{Downstream_Section} explores the impact on downstream tasks. Finally, Section~\ref{Conclusion} concludes the paper and outlines future research directions.

\begin{figure}[t]
    \centering
    \begin{minipage}[h]{0.60\textwidth}
        \centering
        \includegraphics[width=0.78\textwidth]{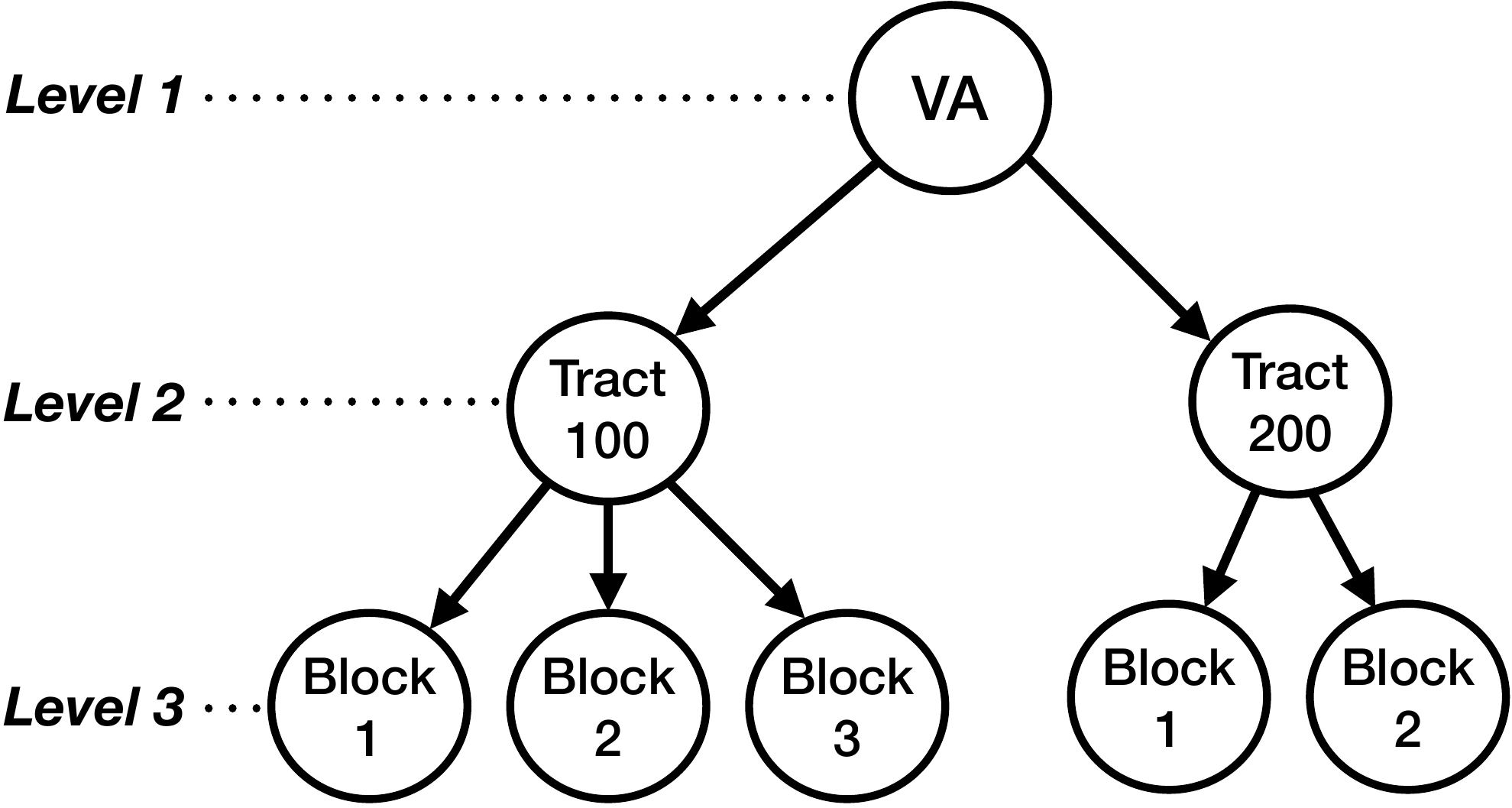}
        \captionof{figure}{A hierarchical structure of geographic regions, where blocks are nested within tracts, and tracts within the state of Virginia. }
        \label{fig:hierarchy_figure}
    \end{minipage}
    \hfill
    \begin{minipage}[h]{0.38\textwidth}
        \centering
        \small 
        \begin{tabular}{@{}llr@{}}
            \toprule
            \textbf{Level} & \textbf{Region} & \textbf{Population} \\
            \midrule
            Level 3 & Block 1 (100) & 120 \\
            Level 3 & Block 2 (100) & 80 \\
            Level 3 & Block 3 (100) & 100 \\
            \addlinespace
            Level 2 & \textbf{Tract 100} & \textbf{300} \\
            \midrule
            Level 3 & Block 1 (200) & 90 \\
            Level 3 & Block 2 (200) & 60 \\
            \addlinespace
            Level 2 & \textbf{Tract 200} & \textbf{150} \\
            \midrule
            Level 1 & \textbf{VA} & \textbf{450} \\
            \bottomrule
        \end{tabular}
        \captionof{table}{Population counts corresponding to each region in the hierarchy shown in Figure~\ref{fig:hierarchy_figure}.}
        \label{tab:hierarchy_figure_table}
    \end{minipage}
\end{figure}

\paragraph{Related Work.}\label{Related_work}
Since the adoption of the U.S. Census Bureau's TopDown Algorithm (TDA)~\cite{USCB_adopts_DP}, there has been growing interest in exploring hierarchical data release under differential privacy. This has motivated theoretical and practical approaches for allocating privacy budgets across hierarchical levels~\cite{boninsegna_2025_DP_release_hierarchical, FvHZ:AIJ21, Cohen_DP_Redistricting_2021, Cohen2022Private}. Accurately privatizing hierarchical data releases, such as those in the U.S.~Census, is crucial, as it directly impacts federal resource allocation and has significant implications for fairness in downstream decisions~\cite{pujol_fair_decision, Wu2022policy, Fioretto_decision_making_under_fairness}. 
Beyond population statistics, hierarchical data plays a key role in other domains. Liu et al~\cite{liu_2022_private_synthetic_data_hierarchical} explored generating private synthetic hierarchical data, while hierarchical structures have also been used in federated learning to enhance privacy and scalability~\cite{chandrasekaran2022hierarchicalfederatedlearningprivacy}.
Fioretto et al.~\cite{FvHZ:AIJ21} proposed an optimization-based method for post-processing privatized hierarchical data to ensure consistency and improve utility. Cohen et al.~\cite{Cohen_DP_Redistricting_2021} examined how privacy budget allocation affects redistricting accuracy in the Census TopDown algorithm, finding that bottom-heavy allocations improve accuracy for small geographic units. Similarly, Dawson et al.~\cite{dawson2023} proposed a greedy iterative approach using noisy priors and demonstrated that unequal budget allocations can achieve higher accuracy than uniform splits.

However, there remains a gap in understanding how to optimally allocate privacy budgets across different levels of the hierarchy, especially when common post-processing alters the bias and variance of differentially private estimates.

\section{Preliminaries and Problem Setting}\label{Prelim}

\subsection{Hierarchical Data Release}

This paper considers the hierarchical release of population statistics. Let \( R_1, R_2, \ldots, R_L \) denote the sets of regions at each level of a hierarchy, where each region in \( R_\ell \) is indexed by a tuple \( (i_1, \ldots, i_\ell) \) indicating its position in the hierarchy. Specifically, \( R_1 \) contains highest-level regions (e.g., states in Fig~\ref{fig:hierarchy_figure}), and each \( R_\ell \) contains subregions nested within regions from level \( \ell - 1 \). Let \( N_{i_1, \ldots, i_\ell} \) denote the population count associated with region \( (i_1, \ldots, i_\ell) \in R_\ell \). These counts satisfy the consistency constraints across levels:
\[
\sum_{(i_1, \ldots, i_L) \in R_L} N_{i_1, \ldots, i_L}
= \cdots = 
\sum_{i_1 \in R_1} N_{i_1} = N_\text{total}.
\]

\textbf{$\blacklozenge$ Example.} In Figure~\ref{fig:hierarchy_figure}, \(R_1 = \{(\text{VA})\}\) with \(N_\text{total} = 450\). At level 2, we have \(R_2 = \{(\text{VA}, 100), (\text{VA}, 200)\}\) with populations \(N_{\text{VA}, 100} = 300\) and \(N_{\text{VA}, 200} = 150\). At level 3, \(R_3\) contains five blocks: \((\text{VA}, 100, 1), (\text{VA}, 100, 2), (\text{VA}, 100, 3), (\text{VA}, 200, 1), (\text{VA}, 200, 2)\) with respective populations 120, 80, 100, 90, and 60. These satisfy the consistency condition:
\[
\sum_{(i_1, i_2, i_3) \in R_3} N_{i_1, i_2, i_3}
= \sum_{(i_1, i_2) \in R_2} N_{i_1, i_2}
= N_\text{total}.
\]

\subsection{Differential Privacy}
This paper considers the problem of releasing privatized hierarchical data under the framework of differential privacy (DP)~\cite{DP_def, DP_book}. Differential privacy provides a formal framework for quantifying the privacy loss incurred when releasing information derived from sensitive data. It ensures that the output of a computation is nearly indistinguishable whether or not any single individual's data is included in the input dataset. This is typically achieved by adding controlled noise to the output, calibrated to a privacy budget that quantifies the level of privacy protection. 

\begin{definition}[Differential Privacy]
    A randomized algorithm $\mathcal{M}$ satisfies $\varepsilon$-differential privacy if, for all $\mathcal{S} \subseteq \operatorname{Range}(\mathcal{M})$ and any two neighboring databases  $D_1 \sim D_2$ (i.e., differing by one entry),
    \[
    \Pr[\mathcal{M}(D_1) \in \mathcal{S}] \leq \exp(\varepsilon) \Pr[\mathcal{M}(D_2) \in \mathcal{S}].
    \]
\end{definition}

Here, $\varepsilon > 0$ is the privacy parameter that quantifies the privacy loss: smaller values ($\varepsilon \to 0$) imply strong privacy, while larger values ($\varepsilon \to \infty$) correspond to weak privacy.

Differential privacy provides two key properties essential to this work: sequential composition and post-processing immunity. Sequential composition quantifies total privacy loss across hierarchical levels, while post-processing immunity guarantees that adjustments such as enforcing consistency or non-negativity do not affect the privacy guarantee.

\begin{theorem}[Sequential Composition]
    Let $\mathcal{M}_1$ be an $\varepsilon_1$-differentially private algorithm, and let $\mathcal{M}_2$ be an $\varepsilon_2$-differentially private algorithm. Then the combined algorithm that outputs $(\mathcal{M}_1(x), \mathcal{M}_2(x))$ is $\varepsilon_1 + \varepsilon_2$-differentially private.
\end{theorem}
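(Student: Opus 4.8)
The plan is to reduce the joint guarantee to the two individual guarantees by exploiting the independence of the internal randomness of $\mathcal{M}_1$ and $\mathcal{M}_2$. I would first establish the claim in the discrete case, where $\operatorname{Range}(\mathcal{M}_1)$ and $\operatorname{Range}(\mathcal{M}_2)$ are countable, since there the argument is cleanest, and then indicate the measure-theoretic extension.

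First I would fix an arbitrary pair of neighboring databases $D_1 \sim D_2$ and write the combined mechanism as $\mathcal{M}(x) = (\mathcal{M}_1(x), \mathcal{M}_2(x))$, where the two components draw their noise from independent sources. Independence is exactly what lets the probability of any single output atom $(s_1, s_2)$ factorize, so that for either database $D$,
\[
\Pr[\mathcal{M}(D) = (s_1, s_2)] = \Pr[\mathcal{M}_1(D) = s_1]\,\Pr[\mathcal{M}_2(D) = s_2].
\]
Next I would apply the $\varepsilon_1$- and $\varepsilon_2$-DP guarantees pointwise to each factor, bounding the ratio at a single atom by the product of the two per-mechanism ratios:
\[
\frac{\Pr[\mathcal{M}(D_1) = (s_1, s_2)]}{\Pr[\mathcal{M}(D_2) = (s_1, s_2)]} = \frac{\Pr[\mathcal{M}_1(D_1) = s_1]}{\Pr[\mathcal{M}_1(D_2) = s_1]} \cdot \frac{\Pr[\mathcal{M}_2(D_1) = s_2]}{\Pr[\mathcal{M}_2(D_2) = s_2]} \leq e^{\varepsilon_1} e^{\varepsilon_2} = e^{\varepsilon_1 + \varepsilon_2}.
\]
The final step is to lift this atomwise bound to an arbitrary event $\mathcal{S} \subseteq \operatorname{Range}(\mathcal{M})$, which I would do by summing the inequality over all $(s_1, s_2) \in \mathcal{S}$:
\[
\Pr[\mathcal{M}(D_1) \in \mathcal{S}] = \sum_{(s_1, s_2) \in \mathcal{S}} \Pr[\mathcal{M}(D_1) = (s_1, s_2)] \leq e^{\varepsilon_1 + \varepsilon_2} \Pr[\mathcal{M}(D_2) \in \mathcal{S}],
\]
which is precisely $(\varepsilon_1 + \varepsilon_2)$-differential privacy. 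A point worth emphasizing is that $\mathcal{S}$ need not be a product set: because the ratio bound holds atom by atom, summing over any subset preserves it.

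The main obstacle is not the algebra but the two modeling assumptions that make it go through. The first is \emph{independence} of the randomness of the two mechanisms; without it the joint probability does not factorize and the clean product of ratios fails, so I would state this explicitly as a hypothesis (the theorem as written is the non-adaptive, independent-randomness version). The second is the passage from the discrete case to general (e.g.\ continuous) output ranges, where atoms have zero probability and one must instead work with densities with respect to a product base measure, bound the joint density ratio via Fubini, and integrate over $\mathcal{S}$ rather than sum. I would handle this by invoking the standard density-ratio characterization of $\varepsilon$-DP for each component and the product structure of the joint law, so that the identical factor-and-bound argument carries over verbatim with integrals in place of sums.
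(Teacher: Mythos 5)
The paper does not prove this theorem at all: it is stated as standard background from the differential privacy literature (cited to the DP references), with no proof given in the appendix. So there is nothing in the paper to compare against line by line; your proposal has to stand on its own merits, and it does. Your argument is the standard textbook proof of non-adaptive sequential composition: factorize the joint law using independence of the internal randomness, apply each mechanism's guarantee to singleton outputs, and lift the pointwise bound to arbitrary events $\mathcal{S}$ by summation (correctly noting that $\mathcal{S}$ need not be a product set), with the measure-theoretic version handled by densities and integrals. You are also right that independence of the two mechanisms' coins is an implicit hypothesis of the theorem as stated, and flagging it is good practice.

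One small technical point: you express the key step as a bound on the ratio
\[
\frac{\Pr[\mathcal{M}_1(D_1) = s_1]}{\Pr[\mathcal{M}_1(D_2) = s_1]} \cdot \frac{\Pr[\mathcal{M}_2(D_1) = s_2]}{\Pr[\mathcal{M}_2(D_2) = s_2]},
\]
which is undefined at atoms where a denominator vanishes (DP forces the corresponding numerator to vanish too, but $0/0$ is still not a number). The clean fix is to avoid division entirely and multiply the two one-sided inequalities
\[
\Pr[\mathcal{M}_1(D_1) = s_1] \leq e^{\varepsilon_1}\Pr[\mathcal{M}_1(D_2) = s_1], \qquad
\Pr[\mathcal{M}_2(D_1) = s_2] \leq e^{\varepsilon_2}\Pr[\mathcal{M}_2(D_2) = s_2],
\]
obtaining $\Pr[\mathcal{M}(D_1) = (s_1,s_2)] \leq e^{\varepsilon_1+\varepsilon_2}\Pr[\mathcal{M}(D_2) = (s_1,s_2)]$ directly, which then sums over $\mathcal{S}$ exactly as you describe. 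This is a presentational repair, not a gap in the idea; with it, your proof is complete.
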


\begin{theorem}[Post-Processing Immunity]
    If a randomized algorithm $\mathcal{M}$ is $\varepsilon$-differentially private, and $f$ is any arbitrary function, then the composition $f \circ \mathcal{M}$ is also $\varepsilon$-differentially private.
\end{theorem}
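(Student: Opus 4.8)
The plan is to reduce the claim to the definition of $\varepsilon$-DP for $\mathcal{M}$ itself via a preimage argument, which is the standard route for post-processing immunity. Fix any two neighboring databases $D_1 \sim D_2$ and any (measurable) subset $\mathcal{S} \subseteq \operatorname{Range}(f \circ \mathcal{M})$. The goal is to establish $\Pr[f(\mathcal{M}(D_1)) \in \mathcal{S}] \leq \exp(\varepsilon)\,\Pr[f(\mathcal{M}(D_2)) \in \mathcal{S}]$.

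First I would treat the case where $f$ is deterministic. The key observation is that the event $\{f(\mathcal{M}(D)) \in \mathcal{S}\}$ coincides exactly with the event $\{\mathcal{M}(D) \in T\}$, where $T \defeq f^{-1}(\mathcal{S}) = \{r \in \operatorname{Range}(\mathcal{M}) : f(r) \in \mathcal{S}\}$ is the preimage of $\mathcal{S}$ under $f$. Because $T$ is itself a subset of $\operatorname{Range}(\mathcal{M})$, the $\varepsilon$-DP guarantee of $\mathcal{M}$ applies directly to it, giving $\Pr[\mathcal{M}(D_1) \in T] \leq \exp(\varepsilon)\,\Pr[\mathcal{M}(D_2) \in T]$. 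Substituting the event identity back in on both sides yields the desired inequality for $f \circ \mathcal{M}$.

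To cover an arbitrary, possibly randomized, function $f$, I would view $f$ as drawing internal randomness $\omega$ from a distribution $\mu$ independent of the data and write it as a mixture of deterministic maps $f_\omega$. Conditioning on the realization $\omega$, the deterministic argument above supplies the $\exp(\varepsilon)$ bound for each $f_\omega$, and integrating over $\omega$ via the law of total probability preserves it: a convex combination of inequalities $a_\omega \leq \exp(\varepsilon)\, b_\omega$ still satisfies $\int a_\omega \, d\mu(\omega) \leq \exp(\varepsilon) \int b_\omega \, d\mu(\omega)$, using that $\omega$ is independent of $D_1, D_2$.

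The main obstacle is bookkeeping rather than deep mathematics. The subtle points are ensuring that the preimage $T = f^{-1}(\mathcal{S})$ is a measurable set, so that the DP guarantee of $\mathcal{M}$ is applicable to it (this requires $f$ to be measurable and $\mathcal{S}$ to lie in the relevant $\sigma$-algebra), and verifying that the post-processing randomness is independent of the sensitive input so that the mixture step is valid. Both hold under the standard measurability assumptions implicit in the definition of DP, so the proof is short once the preimage identity is in place.
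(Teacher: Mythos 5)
Your proof is correct, and it is the canonical argument for post-processing immunity: reduce to the definition of $\varepsilon$-DP via the preimage identity $\{f(\mathcal{M}(D)) \in \mathcal{S}\} = \{\mathcal{M}(D) \in f^{-1}(\mathcal{S})\}$ for deterministic $f$, then handle randomized $f$ by decomposing it into deterministic maps over data-independent randomness and integrating the per-realization inequality. Note that the paper itself states this theorem as a standard preliminary (following the DP literature) and provides no proof of its own in Appendix~\ref{appendix:proofs}, so there is nothing to diverge from; your argument, including the measurability and independence caveats, is exactly the proof the paper implicitly relies on.
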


\subsection{Differentially Private Hierarchical Data Release Model}
Motivated by Census applications, we consider the release of privatized hierarchical data by adding noise at each level of a hierarchy. Let $\varepsilon_1, \varepsilon_2, \ldots, \varepsilon_L$ denote the privacy budgets allocated from the highest level ($\varepsilon_1$) to the lowest ($\varepsilon_L$), with the total privacy budget given by the sequential composition:
\[
\varepsilon_\text{Total} = \varepsilon_1 + \varepsilon_2 + \ldots + \varepsilon_L.
\]

In this work, we add independent Laplace\footnote{While we focus on the Laplace mechanism, similar analysis can be carried out under other mechanisms such as Gaussian, with qualitatively similar tradeoffs between bias, variance, and utility.} noise (which is chosen for clarity in the analysis of bias and variance later in the paper) to each node in the hierarchy, and apply non-negativity post-processing to ensure that released counts remain valid for population or frequency data, as noise can render some counts negative:
\begin{equation}
\tilde{N} = \max\left(0, N + \text{Lap}\left(\frac{1}{\varepsilon}\right)\right). \label{non-neg_post-processing}
\end{equation}
Independent noise addition can lead to inconsistencies across levels,  such as lower-level counts not summing to their corresponding higher-level total. To address this, we apply a hierarchical consistency post-processing step, also described in Section~\ref{Opt_Alloc_Section}.

\section{The Optimal Allocation Problem}\label{Opt_Alloc_Section}

Given a fixed total privacy budget, determining how to allocate noise across hierarchical levels is non-trivial. On one hand, assigning more of the privacy budget to lower levels can improve the accuracy of fine-grained counts, which may be beneficial for downstream tasks. On the other hand, allocating more of the budget to higher levels may be preferable when hierarchical consistency constraints are imposed. Even if we understand whether a bottom-heavy or top-heavy allocation is preferable, it remains unclear how much budget should be assigned to each level.

\subsection{Bias and Variance of $\tilde{N}$}
A high-utility privatized release should minimize the mean squared error (MSE), defined as:
\begin{align}
\mathrm{MSE} = \mathbb{E}[(\tilde{N} - N)^2] = \mathrm{Bias}(\tilde{N})^2 + \mathrm{Var}(\tilde{N})  \label{eq:MSE},
\end{align}
which decomposes into a bias and a variance term through the standard decomposition~\cite{bishop2006pattern}. Thus, a high-utility hierarchical release should exhibit low bias and variance. We begin by deriving the bias introduced by the non-negativity post-processing applied after adding Laplace noise, as shown in prior work~\cite{Ko_AAAI_2025}.
\begin{proposition}
\label{proposition:bias}
For the population $N$, the bias of $\tilde{N}$ is given in closed-form by:
\[
\mathrm{Bias}(\tilde{N}) = 
\mathbb{E} \left[\tilde{N} \right] - N = 
\frac{1}{2 \varepsilon} \exp \left(-N \varepsilon\right) > 0.
\]
\end{proposition}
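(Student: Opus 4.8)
The plan is to compute $\mathbb{E}[\tilde{N}]$ directly from the definition $\tilde{N} = \max(0, N + \text{Lap}(1/\varepsilon))$ and subtract $N$. Writing $X = N + Z$ where $Z \sim \text{Lap}(1/\varepsilon)$ has density $f_Z(z) = \tfrac{\varepsilon}{2}\exp(-\varepsilon |z|)$, the post-processed value is $\tilde{N} = \max(0, X)$. First I would split the expectation according to the sign of $X$: since $\max(0,X)$ equals $X$ when $X \ge 0$ and $0$ otherwise, we have $\mathbb{E}[\tilde{N}] = \mathbb{E}[X \cdot \mathbbm{1}\{X \ge 0\}]$. Substituting $X = N+Z$ turns this into an integral over $z \ge -N$ of $(N+z)$ against the Laplace density.

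The main computation is then to evaluate
\[
\mathbb{E}[\tilde{N}] = \int_{-N}^{\infty} (N+z)\,\frac{\varepsilon}{2}\,e^{-\varepsilon|z|}\,dz.
\]
I would break this integral at $z = 0$ into the piece on $[-N, 0]$ where $|z| = -z$ and the piece on $[0,\infty)$ where $|z| = z$, and evaluate each using standard integrals of the form $\int z e^{\pm \varepsilon z}\,dz$ (integration by parts) and $\int e^{\pm \varepsilon z}\,dz$. After collecting terms, the exponentials $e^{\varepsilon N}$-type contributions should cancel against the $N$ contributions, leaving $N + \tfrac{1}{2\varepsilon}e^{-N\varepsilon}$. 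Subtracting $N$ yields the claimed bias $\tfrac{1}{2\varepsilon}e^{-N\varepsilon}$.

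An alternative, and perhaps cleaner, route is to write $\tilde{N} = X + \max(0,-X) = X - \min(0,X)$, so that $\mathbb{E}[\tilde{N}] - N = \mathbb{E}[X] - N + \mathbb{E}[(-X)^+] = \mathbb{E}[(-X)^+]$, since $\mathbb{E}[X] = N$ because the Laplace noise is mean-zero. Here $(-X)^+ = \max(0, -(N+Z))$ is nonzero only when $Z < -N$, giving $\mathbb{E}[\tilde{N}] - N = \int_{-\infty}^{-N} -(N+z)\,\tfrac{\varepsilon}{2}e^{\varepsilon z}\,dz$, a single tail integral over the region where $|z| = -z$. This avoids the case split and reduces the work to one integration by parts; it also makes the positivity of the bias immediate, since the integrand is nonnegative.

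The step most prone to error is bookkeeping the exponential factors: one must be careful that the Laplace density uses $\exp(-\varepsilon|z|)$ (scale $1/\varepsilon$) rather than some other normalization, and that the boundary term from integration by parts at $z = -N$ is handled correctly. I do not expect any conceptual obstacle — the result is a routine, if slightly delicate, integral — and the cleanest presentation is the tail-integral form, which shows both the closed form and the strict positivity $\tfrac{1}{2\varepsilon}e^{-N\varepsilon} > 0$ in one stroke.
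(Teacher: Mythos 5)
Your proposal is correct, and your preferred route is genuinely different from the paper's. The paper computes $\mathbb{E}[\tilde{N}]$ head-on: it writes the expectation as $\int \max(0,z) f(z)\,dz$ against the Laplace density centered at $N$ and splits the integral into three regions ($z<0$, $0\le z\le N$, $z\ge N$), evaluating each piece by integration by parts and summing to get $N + \tfrac{1}{2\varepsilon}e^{-\varepsilon N}$. Your first route is that same computation in shifted coordinates (the splits at $z=-N$ and $z=0$ in the noise variable correspond exactly to the paper's splits at $0$ and $N$ in the outcome variable), so it is essentially the paper's proof. Your second route, via the identity $\max(0,X) = X + \max(0,-X)$ and mean-zero noise, is the better argument: it isolates the bias as the single tail integral
\[
\mathrm{Bias}(\tilde{N}) = \mathbb{E}\left[\max(0,-X)\right] = \int_{-\infty}^{-N} -(N+z)\,\tfrac{\varepsilon}{2}e^{\varepsilon z}\,dz = \tfrac{1}{2\varepsilon}e^{-\varepsilon N},
\]
requires one integration by parts instead of three regional integrals, and makes the strict positivity claim immediate since the integrand is nonnegative and not identically zero. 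What the paper's more mechanical decomposition buys in exchange is reusable scaffolding: the same three-region splitting is recycled verbatim in the proof of Proposition~\ref{proposition:variance} to compute $\mathbb{E}[\tilde{N}^2]$, whereas with your route the variance would need a separate (though analogous) tail-integral argument, e.g.\ via $\mathbb{E}[\tilde{N}^2] = \mathbb{E}[X^2] - \mathbb{E}[X^2\,\mathbbm{1}\{X<0\}]$.
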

\begin{proof}
    The proof is provided in Appendix~\ref{appendix:proofs}.
\end{proof}
Note that the bias is strictly positive. As $\varepsilon \rightarrow \infty$ (no privacy) or $N \rightarrow \infty$, the $\mathrm{Bias}(\tilde{N}) \rightarrow 0$. This implies that lower levels of the hierarchy—where each count is typically smaller—will suffer more from the positive bias.

We now derive the variance of the non-negativity post-processing applied after adding Laplace noise.
\begin{proposition}
\label{proposition:variance}   
For the population $N$, the variance of $\tilde{N}$ is given in closed-form by:
\[
\mathrm{Var}(\tilde{N}) =
\frac{1}{\varepsilon^2} \left(2 - e^{-N \varepsilon}\right)
- \frac{N}{\varepsilon} e^{-N \varepsilon}
- \frac{1}{4 \varepsilon^2} e^{-2 N \varepsilon}.
\]
\end{proposition}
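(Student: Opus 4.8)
The plan is to compute the variance directly from its definition, $\mathrm{Var}(\tilde{N}) = \mathbb{E}[\tilde{N}^2] - (\mathbb{E}[\tilde{N}])^2$, reusing the closed-form mean $\mathbb{E}[\tilde{N}] = N + \frac{1}{2\varepsilon}e^{-N\varepsilon}$ established in Proposition~\ref{proposition:bias}. Squaring the mean is immediate, so the entire content lies in evaluating the second moment $\mathbb{E}[\tilde{N}^2]$, where $\tilde{N} = \max(0, N+L)$ and $L \sim \text{Lap}(1/\varepsilon)$ has density $\frac{\varepsilon}{2}e^{-\varepsilon|l|}$.

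First I would avoid a two-piece split over the clipping threshold by writing $\tilde{N}^2 = (N+L)^2\,\mathbf{1}[N+L \ge 0] = (N+L)^2 - (N+L)^2\,\mathbf{1}[N+L<0]$. Taking expectations, the first term is the unclipped second moment of a shifted Laplace, which is exact and elementary: since $\mathbb{E}[L] = 0$ and $\mathrm{Var}(L) = 2/\varepsilon^2$, we get $\mathbb{E}[(N+L)^2] = N^2 + 2/\varepsilon^2$. It then remains only to evaluate the single correction term $\mathbb{E}[(N+L)^2\,\mathbf{1}[N+L<0]] = \int_{-\infty}^{-N}(N+l)^2 \frac{\varepsilon}{2}e^{\varepsilon l}\,dl$, where I have used $|l| = -l$ on the region $l < -N \le 0$.

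The substitution $u = N+l$ turns this into $\frac{\varepsilon}{2}e^{-N\varepsilon}\int_{-\infty}^0 u^2 e^{\varepsilon u}\,du$, and the remaining integral is a standard polynomial-times-exponential one; using the antiderivative $e^{\varepsilon u}\!\left(u^2/\varepsilon - 2u/\varepsilon^2 + 2/\varepsilon^3\right)$ (equivalently, two integrations by parts), only the boundary term at $u=0$ survives and the integral equals $2/\varepsilon^3$. Hence the correction equals $\frac{1}{\varepsilon^2}e^{-N\varepsilon}$, giving $\mathbb{E}[\tilde{N}^2] = N^2 + \frac{2}{\varepsilon^2} - \frac{1}{\varepsilon^2}e^{-N\varepsilon}$.

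Finally I would subtract $(\mathbb{E}[\tilde{N}])^2 = N^2 + \frac{N}{\varepsilon}e^{-N\varepsilon} + \frac{1}{4\varepsilon^2}e^{-2N\varepsilon}$; the $N^2$ terms cancel and the remaining terms reassemble into the claimed closed form. The main obstacle is bookkeeping rather than anything conceptual: keeping the boundary evaluations of the exponential-polynomial integral correct, and ensuring the $e^{-N\varepsilon}$ and $e^{-2N\varepsilon}$ contributions coming from the mean-squared term line up against the second-moment computation. A useful sanity check I would run is the $N \to \infty$ limit, where clipping becomes negligible and the formula should reduce to the plain Laplace variance $2/\varepsilon^2$, which it indeed does.
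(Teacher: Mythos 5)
Your proof is correct, and it reaches the same intermediate result as the paper, $\mathbb{E}[\tilde{N}^2] = N^2 + \tfrac{2}{\varepsilon^2} - \tfrac{1}{\varepsilon^2}e^{-\varepsilon N}$, before subtracting the squared mean --- but via a different decomposition of the second moment. The paper splits the integral at the Laplace center, computing $\int_0^N z^2 f(z)\,dz$ and $\int_N^\infty z^2 f(z)\,dz$ separately, which requires the identity $\int_0^x u^2 e^u\,du = (x^2 - 2x + 2)e^x - 2$ plus a second set of standard gamma-type integrals, and then recombining. You instead write the clipped square as the unclipped square minus a tail correction, $\tilde{N}^2 = (N+L)^2 - (N+L)^2\,\mathbf{1}[N+L<0]$, so that the bulk term costs nothing (it is just $N^2 + \mathrm{Var}(L) = N^2 + 2/\varepsilon^2$ from known Laplace moments) and only the single tail integral $\frac{\varepsilon}{2}e^{-\varepsilon N}\int_{-\infty}^0 u^2 e^{\varepsilon u}\,du = \frac{1}{\varepsilon^2}e^{-\varepsilon N}$ needs evaluation. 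Your route buys noticeably less bookkeeping (one integral instead of two, and no cancellation between pieces) and makes structurally transparent why the clipping correction is exponentially small in $N$, which is exactly the sanity check you ran; the paper's route is more mechanical but keeps every regional contribution explicit, which some readers may find easier to audit term by term. All of your individual steps check out: the density simplification $|l| = -l$ on $l < -N \le 0$ is valid since $N \ge 0$, the antiderivative $e^{\varepsilon u}\left(u^2/\varepsilon - 2u/\varepsilon^2 + 2/\varepsilon^3\right)$ is correct, the boundary term at $-\infty$ vanishes, and the final subtraction of $\left(N + \tfrac{1}{2\varepsilon}e^{-\varepsilon N}\right)^2$ reproduces the claimed closed form.
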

\begin{proof}
    The proof is provided in Appendix~\ref{appendix:proofs}.
\end{proof}

As $\varepsilon \rightarrow \infty$ (no privacy), the variance $\mathrm{Var}(\tilde{N}) \rightarrow 0$. Conversely, as $N \rightarrow \infty$, the variance increases monotonically and approaches $\frac{2}{\varepsilon^2}$, which is the variance of the untruncated Laplace noise. This indicates that small population counts incur lower variance due to the truncation effect introduced by the non-negativity constraint, which reduces the overall variability. However, this reduction in variance comes at the cost of increased bias, as lower truncation shifts the mean upwards.

Using the standard decomposition in equation~\eqref{eq:MSE} and applying Proposition~\ref{proposition:bias} and \ref{proposition:variance}, the MSE admits the following closed-form expression:
\begin{align}
\mathrm{MSE} = \mathrm{Bias}(\tilde{N})^2 + \mathrm{Var}(\tilde{N}) = \frac{1}{\varepsilon^2} \left( 2 - e^{-\varepsilon N} \right)
- \frac{N}{\varepsilon} e^{-\varepsilon N}.
\end{align}

As $\varepsilon \rightarrow \infty$ (no privacy), the noise magnitude vanishes and $\mathrm{MSE} \rightarrow 0$.  Conversely, as $\varepsilon \rightarrow 0$, we have $\mathrm{MSE} \rightarrow \infty$ due to increasing noise. When $N \rightarrow \infty$, the positive bias diminishes, and the MSE is dominated by the variance term. \emph{This highlights that bias is most significant when true counts are small, whereas for large counts, the error is primarily driven by noise variance.} We now state two key propositions that play an important role in optimization and privacy budget allocation.  We provide proofs in Appendix~\ref{appendix:proofs}.

\begin{proposition}
\label{proposition:MSE_convex}
    The $\mathrm{MSE}$ is strictly convex in $\varepsilon$ over $\mathbb{R}_{> 0}$ for any fixed $N > 0$.
\end{proposition}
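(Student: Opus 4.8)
The plan is to establish strict convexity directly, by showing that the second derivative $\mathrm{MSE}''(\varepsilon)$ is strictly positive for every $\varepsilon>0$ and every fixed $N>0$. Arguing convexity term-by-term is unappealing: the term $2/\varepsilon^2$ is convex, but the coupled terms involving $e^{-\varepsilon N}$ are not individually convex, so I would work with the whole expression instead. First I would rewrite the closed form compactly as $\mathrm{MSE}(\varepsilon) = \bigl(2 - (1+N\varepsilon)e^{-N\varepsilon}\bigr)/\varepsilon^2$, which collects the two exponential terms and makes the differentiation cleaner.

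Next I would differentiate twice. Writing $h(\varepsilon) = (1+N\varepsilon)e^{-N\varepsilon}$, a short computation gives $h'(\varepsilon) = -N^2\varepsilon\, e^{-N\varepsilon}$ and $h''(\varepsilon) = N^2 e^{-N\varepsilon}(N\varepsilon - 1)$, after which applying the quotient rule to $(2-h)/\varepsilon^2$ yields $\mathrm{MSE}''(\varepsilon) = -h''\varepsilon^{-2} + 4h'\varepsilon^{-3} + 6(2-h)\varepsilon^{-4}$. To avoid sign-tracking through negative powers of $\varepsilon$, I would multiply by the positive quantity $\varepsilon^4$ and substitute the dimensionless variable $u = N\varepsilon > 0$. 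The key simplification is that all the algebra collapses to $\varepsilon^4\,\mathrm{MSE}''(\varepsilon) = 12 - e^{-u}\bigl(u^3 + 3u^2 + 6u + 6\bigr)$, reducing the two-parameter claim to a single-variable inequality.

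It then remains to show $12 e^{u} > u^3 + 3u^2 + 6u + 6$ for all $u>0$. The cleanest route is a Taylor-series comparison: since $12 e^u = \sum_{k\ge 0} \tfrac{12}{k!}u^k$, the difference $\phi(u) = 12 e^u - (u^3+3u^2+6u+6)$ has power-series coefficients $6, 6, 3, 1$ for $k=0,1,2,3$ and $\tfrac{12}{k!}>0$ for $k\ge 4$, so every coefficient is positive. Hence $\phi(u)>0$ for all $u\ge 0$, and since $\varepsilon^4\,\mathrm{MSE}''(\varepsilon) = e^{-u}\phi(u)$ with $\varepsilon^4>0$, strict convexity follows. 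The main obstacle is purely the differentiation bookkeeping in the second step; once the $\varepsilon^4$-scaling and the $u=N\varepsilon$ substitution are introduced, recognizing that the expression reduces to a polynomial-versus-exponential inequality with manifestly positive series coefficients is what makes the argument go through cleanly.
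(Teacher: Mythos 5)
Your proof is correct and takes essentially the same route as the paper's: both compute the second derivative, clear the positive factor $\varepsilon^{-4}$, and reduce to the positivity of the very same expression --- your $e^{-u}\phi(u) = 12 - e^{-u}\left(u^3+3u^2+6u+6\right)$ with $u = N\varepsilon$ is exactly the paper's auxiliary function $g(\varepsilon)$. The only (minor) difference is the last step: the paper shows $g>0$ via $\lim_{\varepsilon\to 0}g(\varepsilon)=6$ together with $g'(\varepsilon)=N^4\varepsilon^3 e^{-\varepsilon N}>0$, while you verify the equivalent inequality $12e^u > u^3+3u^2+6u+6$ by termwise comparison of Taylor coefficients; both finishes are valid.
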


\begin{proposition}
\label{proposition:MSE_bounded}
    For any fixed $\varepsilon > 0$ and $N \geq 0$, the $\mathrm{MSE}$ is bounded as
    \[
    \frac{1}{\varepsilon^2} \leq \mathrm{MSE} < \frac{2}{\varepsilon^2}.
    \]
\end{proposition}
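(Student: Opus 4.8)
The plan is to prove Proposition~\ref{proposition:MSE_bounded} by working directly with the closed-form expression
\[
\mathrm{MSE}(\varepsilon, N) = \frac{1}{\varepsilon^2}\left(2 - e^{-\varepsilon N}\right) - \frac{N}{\varepsilon} e^{-\varepsilon N},
\]
and controlling the sign and magnitude of the $N$-dependent correction terms for fixed $\varepsilon > 0$. To this end I would factor out $\tfrac{1}{\varepsilon^2}$ and write
\[
\mathrm{MSE} = \frac{2}{\varepsilon^2} - \frac{1}{\varepsilon^2}\, g(\varepsilon N), \qquad g(t) \defeq e^{-t}\left(1 + t\right),
\]
where I have set $t = \varepsilon N \geq 0$. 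The entire problem then reduces to bounding the single-variable function $g(t)$ on $[0, \infty)$: the claimed inequalities $\tfrac{1}{\varepsilon^2} \leq \mathrm{MSE} < \tfrac{2}{\varepsilon^2}$ are equivalent, after multiplying through by $\varepsilon^2$, to the clean statement $0 < g(t) \leq 1$.

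\textbf{Upper bound on $\mathrm{MSE}$ (lower bound on $g$).} I would show $g(t) > 0$ for all $t \geq 0$. This is immediate since $e^{-t} > 0$ and $1 + t \geq 1 > 0$, so the product is strictly positive. Subtracting a strictly positive quantity from $\tfrac{2}{\varepsilon^2}$ yields $\mathrm{MSE} < \tfrac{2}{\varepsilon^2}$, giving the strict upper bound.

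\textbf{Lower bound on $\mathrm{MSE}$ (upper bound on $g$).} I would show $g(t) \leq 1$ for all $t \geq 0$, with equality exactly at $t = 0$. The cleanest route is to examine $g'(t) = -e^{-t}(1+t) + e^{-t} = -t\,e^{-t} \leq 0$, so $g$ is non-increasing on $[0,\infty)$ and attains its maximum $g(0) = 1$ at the left endpoint. Hence $g(t) \leq 1$, which translates to $\mathrm{MSE} \geq \tfrac{2}{\varepsilon^2} - \tfrac{1}{\varepsilon^2} = \tfrac{1}{\varepsilon^2}$. Equality holds only when $N = 0$ (equivalently $t = 0$), consistent with the non-strict inequality stated in the proposition.

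The argument is essentially routine once the substitution $t = \varepsilon N$ collapses the two-parameter expression into the single function $g$; there is no genuine obstacle. The only point requiring mild care is bookkeeping the strictness: the upper bound $\mathrm{MSE} < \tfrac{2}{\varepsilon^2}$ is strict because $g(t) > 0$ never vanishes, whereas the lower bound $\mathrm{MSE} \geq \tfrac{1}{\varepsilon^2}$ is attained at $N = 0$, matching the inclusive $\leq$ in the statement. I would double-check that the proposition's hypothesis allows $N = 0$ (it does, since it reads $N \geq 0$), so the lower bound being achieved is consistent.
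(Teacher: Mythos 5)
Your proof is correct and takes essentially the same route as the paper's: both arguments reduce the claim to elementary single-variable calculus on the same exponential correction term, with the lower bound attained at $N = 0$ and the gap to $\tfrac{2}{\varepsilon^2}$ controlled by monotonicity (the paper shows $\tfrac{d}{dN}\mathrm{MSE} = N e^{-\varepsilon N} \geq 0$, which is exactly your $g'(t) = -t e^{-t} \leq 0$ after the substitution $t = \varepsilon N$). Your two cosmetic improvements, collapsing the two parameters into $g(t) = e^{-t}(1+t)$ and obtaining the strict upper bound directly from positivity of $g$ rather than from strict monotonicity plus the limit $N \to \infty$, streamline but do not change the argument.
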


In the context of hierarchical data release, the \emph{total MSE} across a multi-level hierarchy with $L$ levels (each allocated a privacy budget $\varepsilon_\ell$) is given by:
\begin{align}
\mathsf{MSE}_\text{Total} = \sum_{\ell=1}^L \sum_{j \in R_\ell} \left[
\frac{1}{\varepsilon_\ell^2}(2 - e^{-\varepsilon_\ell N_j}) - \frac{N_j}{\varepsilon_\ell} e^{-\varepsilon_\ell N_j}
\right],
\end{align}
where $R_\ell$ denotes the set of nodes at level $\ell$. This expression simply sums the per-node MSE across all levels of the hierarchy. Then, Proposition~\ref{proposition:MSE_bounded}  suggests the following: under a uniform privacy budget (i.e., $\varepsilon_1 = \cdots = \varepsilon_L$), lower levels, containing as many or more nodes than higher levels, contribute more to \(\mathsf{MSE}_\text{Total}\), as the total MSE incurred at these levels is at least as large as that of the higher level. Furthermore, since a finite sum of strictly convex functions remains strictly convex, $\mathsf{MSE}_\text{Total}$ is strictly convex in $\varepsilon$ by Proposition~\ref{proposition:MSE_convex}. This property is crucial, as it ensures that the privacy budget allocation problem can be formulated as a convex optimization problem.

\subsection{Optimization Programs}

To allow flexibility in prioritizing accuracy at different levels, we define a \emph{weighted MSE objective} that incorporates level-wise preferences through weights:
\begin{align}
\mathsf{MSE}_{\mathbf{w}}(\boldsymbol{\varepsilon}) = \sum_{\ell=1}^L w_\ell \cdot \underbrace{\sum_{j \in R_\ell} \left[
\frac{1}{\varepsilon_\ell^2}(2 - e^{-\varepsilon_\ell N_j}) - \frac{N_j}{\varepsilon_\ell} e^{-\varepsilon_\ell N_j}
\right]}_\text{MSE at level $\ell$}, \label{eq:multi_level_mse}
\end{align}
where $\mathbf{w} = (w_1, \dots, w_L) \in \mathbb{R}_{\geq 0}^L$ denotes the level-wise weights, and $\boldsymbol{\varepsilon} = (\varepsilon_1, \dots, \varepsilon_L)$ is the vector of privacy budgets. These weights allow the data curator to emphasize accuracy at specific levels based on downstream utility needs. In Section~\ref{optimized_alloc_bias_var_section}, we focus on the case of equal weights ($w_1 = w_2 = \cdots = w_L$), minimizing the total MSE across the hierarchy without prioritizing any particular level. An ablation study examining how different weight choices affect overall and per-level accuracy is then presented in Section~\ref{ablation}. We now formalize two natural optimization problems.

\textbf{(1) Minimize MSE subject to a total privacy constraint:}
\begin{subequations}
    \label{eq:1}
    \begin{align}
        \minimize_{\boldsymbol{\varepsilon} \in \mathbb{R}_{\geq 0}^L}& \quad \mathsf{MSE}_{\mathbf{w}}(\boldsymbol{\varepsilon}) \\
        \texttt{s.t.}& \quad \sum_{\ell=1}^L \varepsilon_\ell \leq \varepsilon_\text{Total}.
    \end{align}
\end{subequations}

This program reflects settings where a fixed total privacy loss must be allocated across levels to minimize total error. It is important to note that to satisfy $\varepsilon$-DP, one must base the optimization on previously released statistics or a similar data structure, rather than directly computing the total MSE from the raw statistics to be privatized. This is because using raw statistics in the algorithm to determine the privacy budget would itself leak information. However, if only the total privacy loss is released (rather than individual allocations at each hierarchical level), then it may be permissible to use the ground statistics for budget allocation.

\textbf{(2) Minimize total privacy loss subject to a utility target:}
\begin{subequations}
    \label{eq:2}
    \begin{align}
        \minimize_{\boldsymbol{\varepsilon} \in \mathbb{R}_{\geq 0}^L} &\quad \sum_{\ell=1}^L \varepsilon_\ell \\
        \texttt{s.t.}& \quad  \mathsf{MSE}_{\mathbf{w}}(\boldsymbol{\varepsilon}) \leq \tau.
    \end{align}
\end{subequations}

This formulation applies when the desired level of utility is fixed (e.g., when the data curator specifies a target error threshold $\tau$) and seeks to minimize the total privacy cost. Such scenarios are common in practice when utility guarantees must be met under strict privacy budgets~\cite{apple2017privacy}. The optimization determines how to allocate privacy budgets across levels to meet the accuracy constraint with minimal cumulative privacy loss.

\textbf{Convexity and Optimal Allocation.}\label{Convexity_Opt_alloc}
The objective $\mathsf{MSE}_{\mathbf{w}}(\boldsymbol{\varepsilon})$~\eqref{eq:multi_level_mse} is strictly convex in $\boldsymbol{\varepsilon}$ over $\mathbb{R}_{> 0}^L$. This follows from Proposition~\ref{proposition:MSE_convex}, which establishes the strict convexity of each node-level MSE in $\varepsilon_\ell$ for fixed $N_j$, and the fact that convexity is preserved under nonnegative weighted sums and finite summations. As a result, the optimization problems~\eqref{eq:1} and~\eqref{eq:2} are convex and can be efficiently solved in polynomial time using interior-point (IP) methods.

Furthermore, we establish a key theoretical result showing that, under equal weighting, the optimal privacy budget allocation is monotonic across levels and allocates more budget to lower levels in the hierarchy.
\begin{theorem}
\label{theorem:bottom_heavy_optimal}
Under equal weighting, the optimal allocation of the privacy budget satisfies
\[
\varepsilon_1 \leq \varepsilon_2 \leq \cdots \leq \varepsilon_L.
\]
\end{theorem}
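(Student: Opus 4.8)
The plan is to characterize the optimum of the convex program~\eqref{eq:1} through its KKT conditions, leverage the strict convexity from Proposition~\ref{proposition:MSE_convex}, and reduce the monotonicity claim to a pointwise comparison of the marginal error-reduction functions across adjacent levels. Writing $m(\varepsilon,N)=\frac{1}{\varepsilon^2}(2-e^{-\varepsilon N})-\frac{N}{\varepsilon}e^{-\varepsilon N}$ and $g_\ell(\varepsilon_\ell)=\sum_{j\in R_\ell} m(\varepsilon_\ell,N_j)$, I would first argue that the optimum lies in the interior $\mathbb{R}_{>0}^L$ with the budget constraint tight. A short computation gives $\partial_\varepsilon m=\frac{1}{\varepsilon^3}\big(p(\varepsilon N)-4\big)$, where $p(x)=e^{-x}(2+2x+x^2)\le 2<4$, so $m$ is strictly decreasing in $\varepsilon$; hence increasing any $\varepsilon_\ell$ strictly decreases the objective, forcing $\sum_\ell \varepsilon_\ell=\varepsilon_\text{Total}$, while Proposition~\ref{proposition:MSE_bounded} gives $\mathrm{MSE}\ge 1/\varepsilon_\ell^2\to\infty$ as $\varepsilon_\ell\to 0^+$, ruling out vanishing coordinates.

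Stationarity of the Lagrangian $\sum_\ell g_\ell(\varepsilon_\ell)+\lambda\big(\sum_\ell \varepsilon_\ell-\varepsilon_\text{Total}\big)$ then yields $-g_\ell'(\varepsilon_\ell)=\lambda$ at every level for a common multiplier $\lambda>0$. I define the marginal-reduction function $\phi_\ell(\varepsilon)=-g_\ell'(\varepsilon)$, which by the strict convexity of each $g_\ell$ (Proposition~\ref{proposition:MSE_convex}) is strictly decreasing in $\varepsilon$; the optimal budgets are precisely the solutions of $\phi_\ell(\varepsilon_\ell)=\lambda$.

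The crux is to establish the pointwise domination $\phi_\ell(\varepsilon)\le \phi_{\ell+1}(\varepsilon)$ for all $\varepsilon>0$, i.e.\ that lower levels always enjoy at least as large a marginal error reduction. Since level $\ell+1$ refines level $\ell$ and each parent's count equals the sum of its children's counts, I would group the sum defining $\phi_{\ell+1}$ by parent and reduce the claim to the per-parent inequality $\sum_{j\text{ child of }i}\psi(\varepsilon,N_j)\ge \psi(\varepsilon,N_i)$, where $\psi(\varepsilon,N)=-\partial_\varepsilon m(\varepsilon,N)=\frac{1}{\varepsilon^3}\big(4-p(\varepsilon N)\big)$ and $\sum_j N_j=N_i$. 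Setting $q(N)=4-p(\varepsilon N)$, one checks $q(0)=2$ and $q'(N)=-\varepsilon\,p'(\varepsilon N)=\varepsilon^3 N^2 e^{-\varepsilon N}\ge 0$ (using $p'(x)=-x^2 e^{-x}\le 0$), so $q(N)\in[2,4)$ for every $N\ge 0$. Consequently a parent with $c_i\ge 2$ children satisfies $\sum_j q(N_j)\ge 2c_i\ge 4> q(N_i)$, while a parent with a single child has $N_j=N_i$ and equality holds; summing over parents and dividing by $\varepsilon^3>0$ gives $\phi_{\ell+1}\ge\phi_\ell$.

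Finally I would combine the two ingredients: evaluating the domination at $\varepsilon_\ell$ gives $\phi_{\ell+1}(\varepsilon_\ell)\ge\phi_\ell(\varepsilon_\ell)=\lambda=\phi_{\ell+1}(\varepsilon_{\ell+1})$, and since $\phi_{\ell+1}$ is strictly decreasing this forces $\varepsilon_\ell\le\varepsilon_{\ell+1}$; chaining over $\ell=1,\dots,L-1$ delivers the claim. I expect the main obstacle to be the pointwise domination step, since that is where the hierarchy's combinatorial structure (a parent's count splitting into children with the same total) must be turned into an analytic inequality. The clean bound $q\in[2,4)$ is what makes the ``many small children beat one large parent'' intuition rigorous, and I would want to verify it carefully, including the degenerate single-child case where the crude counting bound is replaced by exact equality.
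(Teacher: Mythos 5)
Your proposal is correct and is essentially the paper's own argument: the paper likewise equates marginal MSE reductions across levels at the optimum, bounds the per-node marginal (its $g(\varepsilon,N)\in(-4/\varepsilon^3,\,-2/\varepsilon^3]$ is exactly your $q(\varepsilon N)\in[2,4)$ up to sign), and uses the same one-child/many-children case split so that the children's combined marginal dominates the parent's. The only differences are organizational: the paper phrases it as a contradiction on a two-level sub-hierarchy, while you give a direct KKT argument with pointwise domination of $\phi_{\ell+1}$ over $\phi_\ell$ and explicitly verify interiority and tightness of the budget constraint, points the paper leaves implicit.
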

\begin{proof}
    The proof is provided in Appendix~\ref{appendix:proofs}.
\end{proof}

\subsection{Hierarchical Post-Processing}\label{sec:hierarchical_post_processing}
It is often desirable to post-process the privatized data to enforce hierarchical consistency (i.e., the sum of lower-level counts must match the total) since independent noise addition can violate this constraint. Due to post-processing immunity, this adjustment does not compromise privacy. The following program enforces consistency, as also described in~\cite{Kuo_DP_hierarchical_histogram, FvHZ:AIJ21}:
\begin{subequations}
    \label{eq:3}
    \begin{align}
        \minimize_{v \in \mathbb{R}_{\geq 0}^n} &\; \lVert v - \{\tilde{N}_i\}_{i=1}^n \rVert_2 \label{obj:3a} \\
        \texttt{s.t.}& \sum_{i} v_i = \tilde{N}_\text{total} \label{c:3b}.
    \end{align}
\end{subequations}

In multi-level hierarchies, this procedure is applied recursively in a top-down manner: first between levels 1 and 2, then between each level-2 node and its children at level 3, and so on.

\section{Experimental Results}\label{Exp_Section}

\begin{figure}[t]
  \centering
  \includegraphics[width=0.65\textwidth]{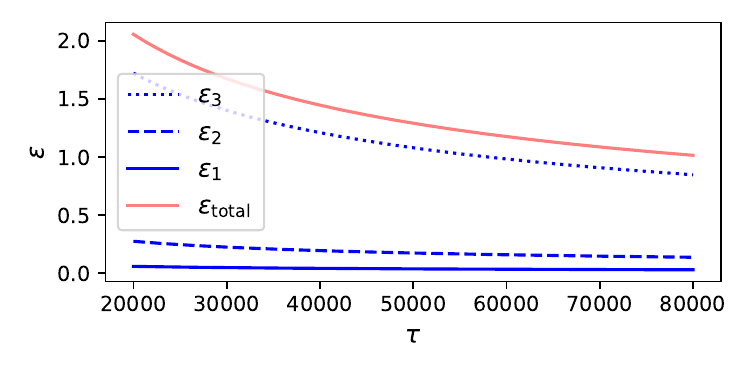}
  \caption{Privacy budget allocation using Optimization Program~\eqref{eq:2} in Wyoming.}
  \label{fig:opt_2nd_56}
\end{figure}

Next, we present empirical evidence demonstrating the effectiveness of the proposed optimization method. We evaluate both bias and variance, as well as the impact of enforcing hierarchical constraints. Experiments are conducted on the release of three-level population statistics, where the hierarchy follows the geographic structure: State $\rightarrow$ Census Tract $\rightarrow$ Census Block.

\textbf{Dataset and Setting.} Our experiments use a subset of the U.S. Census data from the 2020 Census Privacy-Protected Microdata File (PPMF)~\cite{ppmf}, released under the disclosure avoidance methodology adopted for the 2020 Census. Specifically, we use the 2010 release version of the PPMF~\cite{ppmf_2010} to compute the budget allocation that satisfies exact $\varepsilon$-DP.

In the main experiment, we focus on the state of Wyoming (1 state, 128 Census Tracts, and 20{,}975 Blocks), where $\varepsilon_1$, $\varepsilon_2$, and $\varepsilon_3$ denote the privacy budgets for the State, Tract, and Block levels, respectively. \emph{Additional results for other states are provided in the Appendix~\ref{additional_experiments_State}}, showing the same clear trend and reinforcing that the analysis presented in this section applies broadly.

\begin{figure}[t]
  \centering
  \includegraphics[width=0.95\textwidth]{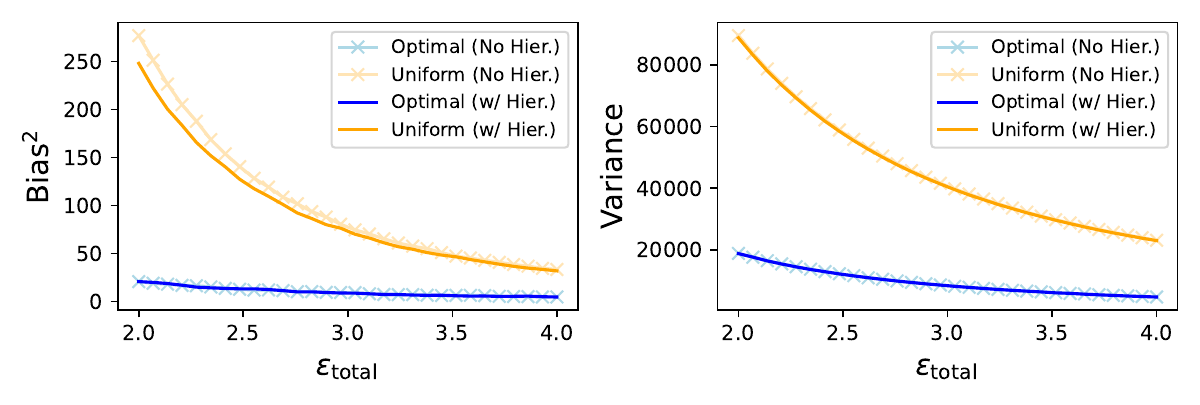}
  \caption{Hierarchical data release performance in Wyoming.}
  \label{fig:1st_3rd_56}
\end{figure}

\textbf{Evaluation Metrics and Baseline.}
We evaluate both $\text{Bias}^2$ and Variance, following the standard decomposition of mean squared error (MSE). Our baseline is uniform allocation of the total privacy budget, a commonly suggested heuristic for hierarchical data structures~\cite{Cohen_DP_Redistricting_2021}. For example, in this setting, a total privacy budget of $\varepsilon_\text{total}$ is divided evenly: $\varepsilon_1 = \varepsilon_2 = \varepsilon_3 = \frac{\varepsilon_\text{total}}{3}$.

\subsection{Optimized Allocation: Bias$^2$ and Variance}\label{optimized_alloc_bias_var_section}

We begin by comparing the performance of the DP-post-processed estimates under a non-negativity constraint, without enforcing hierarchical consistency, as shown in Figure~\ref{fig:1st_3rd_56} under ``No~Hier." The optimization-based approach significantly outperforms uniform allocation in terms of $\text{Bias}^2$, with the uniform method yielding 10 times higher $\text{Bias}^2$. Variance is also approximately four times as high under the uniform approach. This shows that, for the same total privacy budget, both bias and variance can be substantially reduced through optimized budget allocation. Figure~\ref{fig:epsilon_dist_56} illustrates the resulting privacy budget distribution from our proposed method~\eqref{eq:1}, exhibiting the relationship $\varepsilon_1 \leq \varepsilon_2 \leq \varepsilon_3$ as shown by Theorem~\ref{theorem:bottom_heavy_optimal}.

\begin{wrapfigure}{r}{0.45\textwidth}
    \centering
    \vspace{-1em}
    \includegraphics[width=0.45\textwidth]{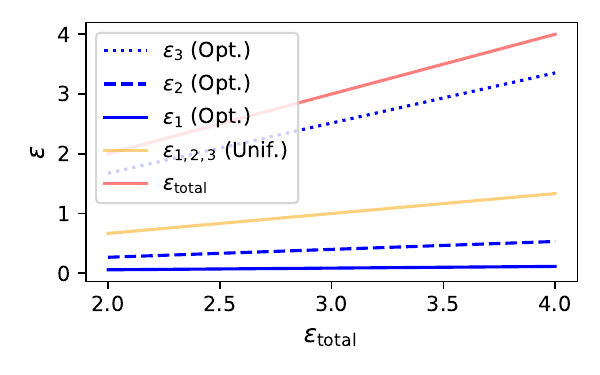}
    \caption{Optimized privacy budget allocation for Wyoming using program~\eqref{eq:1}.}
    \label{fig:epsilon_dist_56}
    \vspace{-1em}
\end{wrapfigure}

Next, we consider the case with hierarchical post-processing, as shown in Figure~\ref{fig:1st_3rd_56} under ``w/~Hier." The overall trend remains similar: the optimized allocation continues to outperform uniform allocation at similar rate. Importantly, both bias and variance are slightly reduced after enforcing hierarchical consistency. By constraining the lower-level population counts to match the upper level, the post-processing step helps mitigate strong positive bias in small counts, which in turn slightly reduces the overall variance.

Finally, we consider the case of minimizing total privacy loss subject to a utility target using program~\eqref{eq:2}, as shown in Figure~\ref{fig:opt_2nd_56}. As the total error tolerance $\tau$ increases, the required total privacy budget $\varepsilon_\text{total}$ decreases. For each value of $\tau$, we observe that the majority of the privacy budget is allocated to the lower level, consistent with Theorem~\ref{theorem:bottom_heavy_optimal}. Despite using 2010 PPMF data to predict the privacy budget required for the 2020 release, our estimates closely align with observed utility. For example, at $\tau = 20{,}000$, the optimization allocates approximately $\varepsilon_\text{total} \approx 2$, and in Figure~\ref{fig:1st_3rd_56}, we see that the total MSE (bias$^2$ + variance) at this budget is indeed around 20{,}000.

\subsection{Ablation on Weights}\label{ablation}

To study the effect of weighting, we vary \( w_3 \) in equation~\eqref{eq:multi_level_mse}, the weight assigned to level 3, and measure the resulting MSE at each level. The remaining weights are distributed equally between levels 1 and 2, i.e., \( w_1 = w_2 = \frac{1 - w_3}{2} \). The total privacy budget is fixed at \( \varepsilon = 2.0 \), and all other parameters are held constant.

\textbf{Experimental Result.} 
We begin by analyzing how total and level-wise MSE change as a function of the weight \( w_3 \). As shown in Figure~\ref{fig:weights_experiment_56}, increasing \( w_3 \) leads to a reduction in MSE at level 3 (\( \text{MSE}_3 \)), as more weight on this level increases the allocated privacy budget \( \varepsilon_3 \). Conversely, MSE at levels 1 and 2 increases with \( w_3 \), since their corresponding weights, and thus \( \varepsilon_1 \) and \( \varepsilon_2 \), decrease.
When all three weights are equal (i.e., \( w_1 = w_2 = w_3 = \frac{1}{3} \)), the optimization corresponds to minimizing total MSE. This setting, indicated by the orange vertical line in Figure~\ref{fig:weights_experiment_56}, yields the lowest total error. Deviating from this uniform weighting shifts accuracy toward the prioritized level but increases overall error.

\begin{figure}[t]
  \centering
  \includegraphics[width=0.95\textwidth]{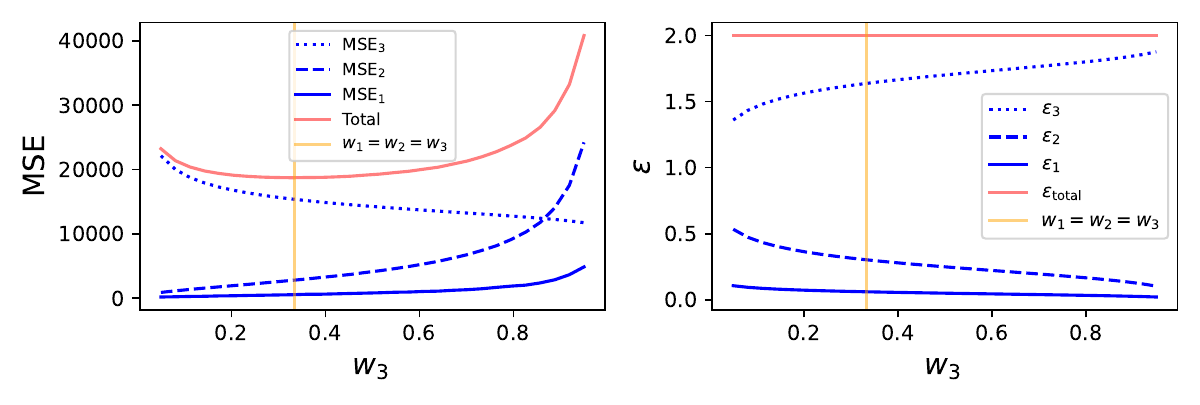}
  \caption{Total and level-wise MSE as a function of varying $w_3$ (left), and privacy budget allocation across levels from program~\eqref{eq:1} under varying $w_3$ (right) in Wyoming.}
  \label{fig:weights_experiment_56}
\end{figure}

\section{Downstream Allocation Task}\label{Downstream_Section}

Motivated by Census applications where hierarchical data inform budget allocation, such as the distribution of Title I funds~\cite{title1} or congressional apportionment~\cite{balinski2001fair}, we analyze the impact of downstream decision-making tasks that rely on privatized hierarchical statistics. 
Many such tasks distribute a fixed budget \( B \) across subgroups according to their size or assessed need. These tasks often incorporate a \textit{weight function} to encode policy preferences; for example, sublinear weighting may be used to prioritize minority populations. We formalize this setting using a generalized allocation framework that captures a broad class of weighted budget rules.

\textbf{Generalized budget allocation task.} 
The process of budget allocation can be broadly described by the following steps:
\begin{enumerate}[leftmargin=*]
    \item \emph{Estimate the size} of each group \( i \in R_2 \) relative to the total population: 
    \(
    \displaystyle P_i \triangleq \frac{N_i}{\sum_{j \in R_2} N_j} = \frac{N_i}{N_\text{total}}.
    \)
    
    \item \emph{Apply a weight function} \( W: \mathbb{R} \rightarrow \mathbb{R} \) to each group's proportion to reflect policy preferences:
    \(
    \displaystyle W_i \triangleq W(P_i).
    \)
    
    \item \emph{Distribute} the total budget \( B \) proportionally according to the computed weights.
\end{enumerate}
If \( W \) is the identity function (i.e., \( W(p) = p \)), the allocation is directly proportional to group sizes.

\subsection{Bias$^2$ and Variance under Different Weight Functions}

We study the Bias$^2$ and Variance of the allocation derived from privatized weights \( \tilde{W}_i \) (i.e., using privatized statistics to compute proportions and apply the weight function) and assess how the choice of weight function \( W \) influences downstream performance. Specifically, we evaluate the bias and variance of the resulting \textit{misallocation} as a percentage, rather than in terms of a fixed total budget \( B \).

We evaluate three weight functions: \emph{logarithmic} (which favors minority groups), \emph{linear} (which is proportional), and \emph{quadratic} (which favors majority groups), to examine how allocation preferences affect performance under different privacy mechanisms.

\textbf{Dataset and Setting.}
We use the same dataset as in Section~\ref{Exp_Section}, where the 2010 PPMF is used as prior data and the 2020 PPMF serves as ground truth. All experiments are conducted on the state of Wyoming, with additional results for other states provided in Appendix~\ref{additional_experiments_State}. In this experiment, we randomly select a Census Tract within Wyoming and allocate the privacy budget among the Census Blocks within that tract. The procedure involves: \textbf{(1)} allocating the privacy budget using program~\eqref{eq:1}, \textbf{(2)} applying hierarchical post-processing for consistency using program~\eqref{eq:3}, and \textbf{(3)} computing weighted group proportions based on the resulting privatized counts.

\begin{figure}[t]
  \centering
  \includegraphics[width=0.95\textwidth]{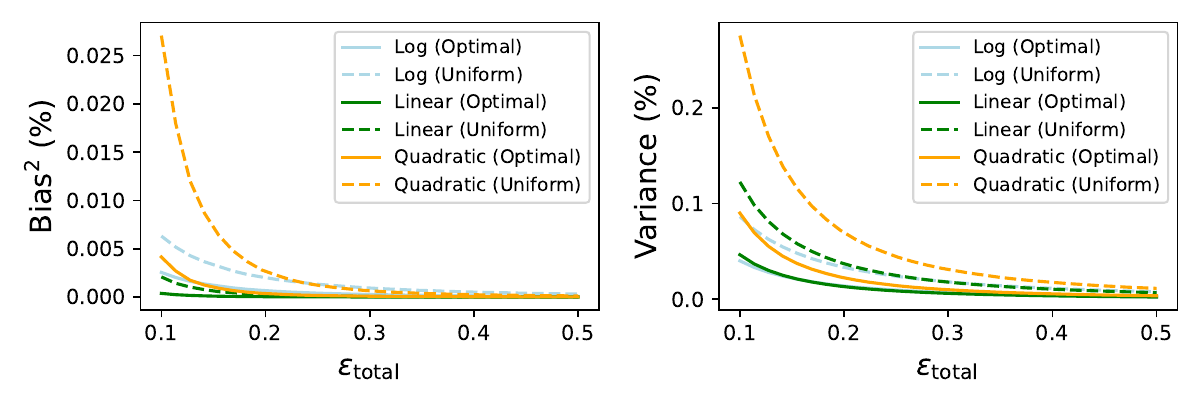}
  \caption{Bias$^2$ and Variance for three weight functions $W(p)$: logarithmic ($\log(p+1)$), linear ($p$), and quadratic ($p^2$), in a single Census Tract in Wyoming.}
  \label{fig:allocation_56_100}
\end{figure}

\textbf{Experimental Results.} 
We begin by evaluating performance under \textit{linear weighting}, where the optimal allocation using Program~\ref{eq:1} significantly reduces both bias and variance. This correction arises from better representation of group sizes in the allocation, reducing misallocation error.

For \textit{nonlinear weight functions}, bias will shift by Jensen’s inequality: since the weight function \( W(\cdot) \) is applied to noisy proportions, we have
\[
\mathbb{E}[W(\tilde{P}_i)] \neq W(\mathbb{E}[\tilde{P}_i]),
\]
which introduces bias. For \textit{convex} functions such as the quadratic, this leads to an positive bias in expected weights; for \textit{concave} functions such as the logarithmic, it results in negative bias. As a result, \(\text{Bias}^2\) is generally larger under nonlinear weighting than under linear.

Under \textit{log weighting}, group size differences are compressed, promoting more equal allocation. Since uniform allocation ignores group sizes and tends to overestimate small populations due to higher noise, it performs more similarly to the optimal allocation under log weighting than under linear or quadratic, resulting in a smaller performance gap.

Regarding variance, log weighting tends to dampen the impact of noise, often resulting in lower variance. In contrast, quadratic weighting amplifies group size differences and can lead to increased variance. However, the behavior of variance is also influenced by the \textit{skewness} of the population. In highly skewed populations, this contrast becomes less apparent. The log transformation flattens group differences, which makes small, noisy groups more influential after normalization, potentially increasing variance. Conversely, the quadratic transformation exaggerates group differences, allowing large, stable groups to dominate the normalization, which can reduce variance in the final allocation.

Overall, across all weight functions, \emph{optimal allocation consistently outperforms uniform allocation}, although the degree of improvement varies with the weighting scheme and population structure.

These findings also highlight the importance of choosing a weighting scheme that aligns with the intended policy objective, as different schemes impact bias and variance in distinct ways. This suggests that practitioners should choose weight functions with careful attention to the fairness objectives or utility demands of their specific application.

\section{Conclusion}\label{Conclusion}

This paper addresses the fundamental challenge of optimal privacy budget allocation for hierarchical data release under differential privacy. We propose a convex optimization framework that minimizes the total \emph{mean squared error} (MSE), accounting for both bias and variance introduced by Laplace noise and non-negativity post-processing. Our theoretical analysis reveals that optimal allocations are bottom-heavy, assigning more of the privacy budget to lower levels of the hierarchy. This strategy is shown to be strictly superior to uniform allocation, both analytically and empirically.

Through extensive experiments on real U.S. Census data, the paper shows that the proposed allocation significantly reduces both bias and variance compared to a standard approach. Additionally, we extend our framework to evaluate downstream resource allocation tasks, showing that optimal allocation improves utility across a range of preference-weighted schemes, highlighting its practical and societal relevance in applications such as public funding and policy-driven resource distribution.

These findings emphasize the importance of structured privacy budget allocation in practical applications of differential privacy, especially in scenarios where hierarchical consistency and data utility are critical. Future directions include extending the approach to other privatization mechanisms, incorporating fairness constraints, and exploring its impact on broader policy-driven decisions.

\bibliographystyle{unsrt}
\bibliography{references}

\appendix

\newpage

\section{Missing Proofs}\label{appendix:proofs}

\begin{proof}[Proof of Proposition \ref{proposition:bias}]
\label{proof:bias}
Let \( f(z) = \frac{\varepsilon}{2} \exp\left( -\varepsilon |z - N| \right) \) be the PDF of the Laplace centered at \( N \) with scale parameter \( 1/\varepsilon \). The expected value of the post-processed count is given by:
\begin{align}
\mathbb{E}[\tilde{N}] &= \int_{-\infty}^{\infty} \max(0, z) f(z) \, dz \nonumber \\
&= \int_{-\infty}^{0} 0 \cdot f(z) \, dz + \int_{0}^{N} z f(z) \, dz + \int_{N}^{\infty} z f(z) \, dz. \label{eq:exp_postproc} 
\end{align}

We now compute each term in \eqref{eq:exp_postproc} separately:
\begin{align}
\int_{-\infty}^{0} 0 \cdot f(z) \, dz &= 0, \label{eq:int1} \\
\int_{0}^{N} z f(z) \, dz &= \frac{1}{2}(N - \tfrac{1}{\varepsilon}) + \frac{1}{2\varepsilon} \exp(-\varepsilon N), \label{eq:int2} \\
\int_{N}^{\infty} z f(z) \, dz &= \frac{1}{2}(N + \tfrac{1}{\varepsilon}). \label{eq:int3}
\end{align}

Combining equations~\eqref{eq:int1}--\eqref{eq:int3}, we obtain:
\[
\mathbb{E}[\tilde{N}] = N + \frac{1}{2\varepsilon} \exp(-\varepsilon N).
\]

Thus, the bias of \( \tilde{N} \) is:
\[
\mathrm{Bias}(\tilde{N}) = \mathbb{E}[\tilde{N}] - N = \frac{1}{2\varepsilon} \exp(-\varepsilon N) > 0.
\]
\end{proof}

\begin{proof}[Proof of Proposition \ref{proposition:variance}]
\label{proof:variance}

From Proposition~\ref{proposition:bias}, the expected value is:
\[
\mathbb{E}[\tilde{N}] = N + \frac{1}{2\varepsilon} e^{-\varepsilon N}.
\]

Let the PDF of the Laplace centered at \( N \) with the scale of $1/\varepsilon$ be:
\[
f(z) = \frac{\varepsilon}{2} \exp(-\varepsilon |z - N|).
\]

We split the second moment:
\[
\mathbb{E}[\tilde{N}^2] = \int_{-\infty}^{0} 0^2 f(z)\,dz + \int_0^{N} z^2 f(z)\,dz + \int_{N}^{\infty} z^2 f(z)\,dz.
\]

\textbf{Region 1:} \( z \in [0, N] \)
\[
\int_0^{N} z^2 f(z)\,dz = \frac{\varepsilon}{2} \int_0^{N} z^2 e^{-\varepsilon (N - z)} dz
= \frac{\varepsilon}{2} e^{-\varepsilon N} \int_0^N z^2 e^{\varepsilon z} dz.
\]

Let \( u = \varepsilon z \), so \( dz = \frac{du}{\varepsilon} \), the limits become \( 0 \) to \( \varepsilon N \):
\[
= \frac{1}{2\varepsilon^2} e^{-\varepsilon N} \int_0^{\varepsilon N} u^2 e^u du.
\]

Using the identity:
\[
\int_0^x u^2 e^u du = (x^2 - 2x + 2)e^x - 2,
\]
we get:
\[
\int_0^N z^2 f(z)\,dz = \frac{1}{2\varepsilon^2} \left( \varepsilon^2 N^2 - 2\varepsilon N + 2 - 2 e^{-\varepsilon N} \right).
\]

\textbf{Region 2:} \( z \in [N, \infty) \)

Let \( u = z - N \), so:
\[
\int_N^\infty z^2 f(z)\,dz = \frac{\varepsilon}{2} \int_0^\infty (u + N)^2 e^{-\varepsilon u} du = \frac{\varepsilon}{2} \int_0^\infty (u^2 + 2Nu + N^2) e^{-\varepsilon u} du .
\]

Expanding:
\[
= \frac{\varepsilon}{2} \left( \int_0^\infty u^2 e^{-\varepsilon u} du + 2N \int_0^\infty u e^{-\varepsilon u} du + N^2 \int_0^\infty e^{-\varepsilon u} du \right).
\]

Using standard integrals:
\[
\int_0^\infty e^{-\varepsilon u} du = \frac{1}{\varepsilon}, \quad
\int_0^\infty u e^{-\varepsilon u} du = \frac{1}{\varepsilon^2}, \quad
\int_0^\infty u^2 e^{-\varepsilon u} du = \frac{2}{\varepsilon^3},
\]
we get:
\[
= \frac{1}{2\varepsilon^2} \left( 2 + 2\varepsilon N + \varepsilon^2 N^2 \right).
\]

\textbf{Combine both parts:}
\[
\mathbb{E}[\tilde{N}^2] = \frac{1}{\varepsilon^2} \left( \varepsilon^2 N^2 + 2 - e^{-\varepsilon N} \right).
\]

Now compute the variance:
\[
\mathrm{Var}(\tilde{N}) = \mathbb{E}[\tilde{N}^2] - \left( \mathbb{E}[\tilde{N}] \right)^2.
\]

Recall:
\[
\mathbb{E}[\tilde{N}] = N + \frac{1}{2\varepsilon} e^{-\varepsilon N} \Rightarrow
\left( \mathbb{E}[\tilde{N}] \right)^2 = N^2 + \frac{1}{\varepsilon} N e^{-\varepsilon N} + \frac{1}{4\varepsilon^2} e^{-2\varepsilon N}.
\]

Therefore:
\[
\begin{aligned}
\mathrm{Var}(\tilde{N}) 
&= \frac{1}{\varepsilon^2} \left( \varepsilon^2 N^2 + 2 - e^{-\varepsilon N} \right)
- \left( N^2 + \frac{1}{\varepsilon} N e^{-\varepsilon N} + \frac{1}{4\varepsilon^2} e^{-2\varepsilon N} \right) \\
&= \frac{1}{\varepsilon^2} \left( 2 - e^{-\varepsilon N} \right)
- \frac{N}{\varepsilon} e^{-\varepsilon N}
- \frac{1}{4\varepsilon^2} e^{-2\varepsilon N}.
\end{aligned}
\]
\end{proof}

\begin{proof}[Proof of Proposition \ref{proposition:MSE_convex}]
Let
\[
f(\varepsilon) = \frac{1}{\varepsilon^2} \left( 2 - e^{-\varepsilon N} \right)
- \frac{N}{\varepsilon} e^{-\varepsilon N}.
\]

We show that \( f''(\varepsilon) > 0 \) for all \( \varepsilon > 0 \). First, compute the first derivative:
\[
f'(\varepsilon) = \frac{N^2 e^{-\varepsilon N} \varepsilon^2 + 2N e^{-\varepsilon N} \varepsilon + 2 e^{-\varepsilon N} - 4}{\varepsilon^3}.
\]
Then, compute the second derivative:
\[
f''(\varepsilon) = -\frac{N^3 e^{-\varepsilon N} \varepsilon^3 + 3\left(N^2 e^{-\varepsilon N} \varepsilon^2 + 2N e^{-\varepsilon N} \varepsilon + 2 e^{-\varepsilon N} - 4\right)}{\varepsilon^4}.
\]

Let
\[
g(\varepsilon) = -N^3 e^{-\varepsilon N} \varepsilon^3 - 3\left(N^2 e^{-\varepsilon N} \varepsilon^2 + 2N e^{-\varepsilon N} \varepsilon + 2 e^{-\varepsilon N} - 4\right),
\]
so that \( f''(\varepsilon) = \frac{g(\varepsilon)}{\varepsilon^4} \). It suffices to show \( g(\varepsilon) > 0 \) for all \( \varepsilon > 0 \).

First, evaluate the function at \( \varepsilon = 0 \). Taking the limit,
\[
\lim_{\varepsilon \to 0} g(\varepsilon) = -0 - 3(0 + 0 + 2 - 4) = 6.
\]
Next, compute the derivative:
\[
g'(\varepsilon) = N^4 e^{-\varepsilon N} \varepsilon^3 > 0 \quad \text{for all } \varepsilon > 0.
\]
Since \( g(0) = 6 > 0 \) and \( g'(\varepsilon) > 0 \), it follows that \( g(\varepsilon) > 0 \) for all \( \varepsilon > 0 \).

Therefore, \( f''(\varepsilon) = \frac{g(\varepsilon)}{\varepsilon^4} > 0 \) for all \( \varepsilon > 0 \), proving that \( f \) is strictly convex on \( \mathbb{R}_{> 0} \).
\end{proof}

\begin{proof}[Proof of Proposition \ref{proposition:MSE_bounded}]
Let
\[
f(N) = \frac{1}{\varepsilon^2}(2 - e^{-\varepsilon N}) - \frac{N}{\varepsilon} e^{-\varepsilon N}.
\]

First, evaluate the lower bound at \( N = 0 \):
\[
f(0) = \frac{1}{\varepsilon^2}(2 - 1) - \frac{0}{\varepsilon} \cdot 1 = \frac{1}{\varepsilon^2}.
\]

Now consider the limit as \( N \to \infty \):
\[
\lim_{N \to \infty} f(N) = \frac{1}{\varepsilon^2} \cdot 2 = \frac{2}{\varepsilon^2}.
\]

To show monotonicity, compute the derivative:
\[
\frac{d}{dN} f(N) = N e^{-\varepsilon N}.
\]

This is non-negative for all \( N \geq 0 \), and strictly positive for \( N > 0 \), which implies that \( f(N) \) is strictly increasing in \( N \).

Hence, the function is bounded below by its value at \( N = 0 \) and bounded above by its limit as \( N \to \infty \). Therefore,
\[
\frac{1}{\varepsilon^2} \leq f(N) < \frac{2}{\varepsilon^2}.
\]
\end{proof}

\begin{proof}[Proof of Theorem \ref{theorem:bottom_heavy_optimal}]
To show that $\varepsilon_1 \leq \varepsilon_2 \leq \cdots \leq \varepsilon_L $, it suffices to show that $\varepsilon_n \leq \varepsilon_{n+1}$ for any two consecutive levels in the hierarchy. Consider an arbitrary two-level hierarchy where the upper-level node has count $N$ and the lower-level consists of subcounts $N_i$ satisfying $\sum_i N_i = N$.

Define the per-node MSE as a function of $\varepsilon > 0$ and count $N \geq 0$:
\[
f(\varepsilon, N) = \frac{1}{\varepsilon^2}(2 - e^{-\varepsilon N}) - \frac{N}{\varepsilon} e^{-\varepsilon N}.
\]

Suppose, for the sake of contradiction, that $\varepsilon_n > \varepsilon_{n+1}$ is optimal. Then, the marginal utility of privacy budget must be equal across levels:
\[
\frac{\partial f(\varepsilon_n, N)}{\partial \varepsilon} = \sum_i \frac{\partial f(\varepsilon_{n+1}, N_i)}{\partial \varepsilon}.
\]

Let us define:
\[
g(\varepsilon, N) \triangleq \frac{\partial f(\varepsilon, N)}{\partial \varepsilon} = \frac{N^2 e^{-\varepsilon N} \varepsilon^2 + 2N e^{-\varepsilon N} \varepsilon + 2 e^{-\varepsilon N} - 4}{\varepsilon^3}.
\]

First, observe that $g(\varepsilon, N)$ is strictly increasing in $\varepsilon$ since $f(\varepsilon, N)$ is strictly convex in $\varepsilon$ by Proposition~\ref{proposition:MSE_convex}. Therefore, for fixed $N$, we have:
\[
g(\varepsilon_n, N) > g(\varepsilon_{n+1}, N).
\]

Next, notice that $\frac{\partial g(\varepsilon, N)}{\partial N} = -N^2 e^{-\varepsilon N} \leq 0$, so $g(\varepsilon, N)$ is monotonically decreasing in $N$.

Furthermore, observe:
\[
g(\varepsilon, 0) = -\frac{2}{\varepsilon^3}, \quad \lim_{N \to \infty} g(\varepsilon,N) = -\frac{4}{\varepsilon^3}.
\]
Hence, for all $N \geq 0$, we have the bound:
\[
-\frac{4}{\varepsilon^3} \leq g(\varepsilon,N) \leq -\frac{2}{\varepsilon^3}.
\]

Let $k$ be the number of nodes at the lower level. Now, consider two cases:  
\textbf{(1)} If $k = 1$, then trivially $g(\varepsilon, N) = \sum_i g(\varepsilon, N_i)$, since there is only a single lower-level node and it is equal to $N$.  
\textbf{(2)} If $k \geq 2$, then using the bounds on $g(\varepsilon, N)$, we have:
$$
g(\varepsilon, N) \geq -\frac{4}{\varepsilon^3} \geq k \cdot \left(-\frac{2}{\varepsilon^3}\right) \geq \sum_i g(\varepsilon, N_i),
$$
since each $g(\varepsilon, N_i) \leq -\frac{2}{\varepsilon^3}$ and there are $k$ such terms in the sum.

Combining the two observations, we obtain
\[
g(\varepsilon_n, N) > g(\varepsilon_{n+1}, N) \geq \sum_i g(\varepsilon_{n+1}, N_i),
\] which is a contradiction.
\end{proof}

\FloatBarrier
\section{Skewness Analysis}\label{skewness_analysis}

\begin{figure}[h]
    \centering
    \begin{minipage}{0.47\textwidth}
        \centering
        \includegraphics[width=\linewidth]{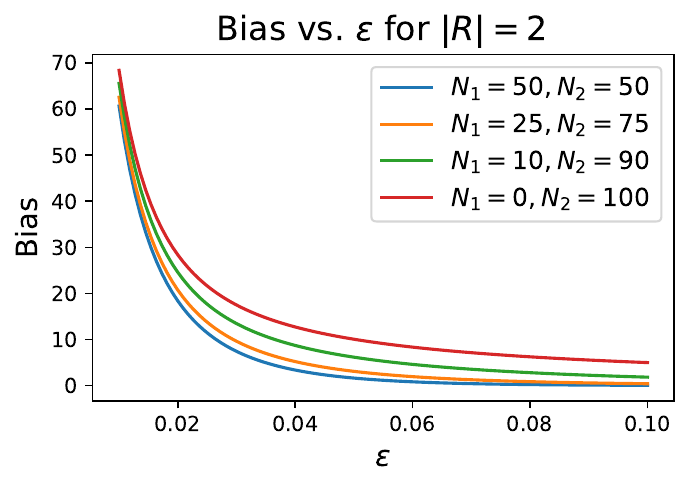}
    \end{minipage}
    \hfill
    \begin{minipage}{0.48\textwidth}
        \centering
        \includegraphics[width=\linewidth]{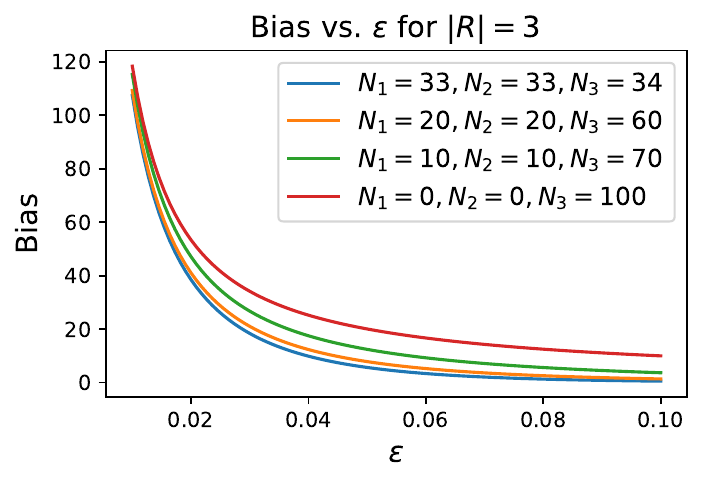}
    \end{minipage}
    \caption{Bias incurred as a function of $\varepsilon$ for different distributions of $N_i$, where the total sum is fixed at 100. The left plot represents the 2-region case ($N_1 + N_2 = 100$), and the right plot represents the 3-region case ($N_1 + N_2 + N_3 = 100$).}
    \label{fig:skewness_plot}
\end{figure}

In this section, we investigate how the skewness of a population distribution affects bias. Even when the total population and the number of regions are fixed, the bias introduced by non-negativity post-processing~\eqref{non-neg_post-processing} can vary depending on how the population is distributed. As illustrated in Figure~\ref{fig:skewness_plot}, bias decreases as the distribution becomes more uniform. In fact, the bias is minimized when the population is evenly distributed across all regions.

\begin{theorem}
For a fixed privacy budget $\varepsilon$, the bias is minimized when $N_i$ is uniform for all $i \in R$.
\end{theorem}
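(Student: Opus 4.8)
The plan is to reduce the claim to a single application of Jensen's inequality. By Proposition~\ref{proposition:bias}, each region $i \in R$ contributes bias $\mathrm{Bias}(\tilde{N}_i) = \frac{1}{2\varepsilon} e^{-\varepsilon N_i}$, so the aggregate bias over the $k = |R|$ regions is
\[
\sum_{i \in R} \mathrm{Bias}(\tilde{N}_i) = \frac{1}{2\varepsilon} \sum_{i \in R} e^{-\varepsilon N_i}.
\]
Since $\varepsilon$ is held fixed, minimizing the total bias is equivalent to minimizing $\sum_{i} \phi(N_i)$, where $\phi(x) = e^{-\varepsilon x}$, subject to the consistency constraint $\sum_{i} N_i = N_{\text{total}}$ and the non-negativity constraint $N_i \geq 0$.

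Next I would invoke convexity of the exponential. The map $\phi(x) = e^{-\varepsilon x}$ is strictly convex on $\mathbb{R}$ for every $\varepsilon > 0$, since $\phi''(x) = \varepsilon^2 e^{-\varepsilon x} > 0$. Applying Jensen's inequality to the points $N_1, \dots, N_k$ yields
\[
\frac{1}{k} \sum_{i} \phi(N_i) \geq \phi\!\left(\frac{1}{k}\sum_{i} N_i\right) = \phi\!\left(\frac{N_{\text{total}}}{k}\right),
\]
so that $\sum_{i} \phi(N_i) \geq k\,\phi(N_{\text{total}}/k)$, with equality if and only if all $N_i$ coincide. Because the common value $N_{\text{total}}/k$ is strictly positive and hence interior to the feasible set $\{N_i \geq 0\}$, the uniform allocation is feasible and attains this lower bound; strict convexity then forces it to be the unique minimizer. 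This establishes that the aggregate bias is minimized exactly when $N_i$ is uniform across $R$.

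I do not anticipate a serious obstacle: the result follows from the closed-form bias in Proposition~\ref{proposition:bias} together with the elementary convexity of the exponential. The only point requiring care is to certify that the symmetric stationary point is a genuine global minimizer rather than a saddle or a boundary optimum, which the strict convexity of $\sum_{i} \phi(N_i)$ over the affine constraint set settles immediately, since a strictly convex function on a convex set has at most one minimizer and symmetry forces it to be uniform. An alternative route via Lagrange multipliers would set $\phi'(N_i) = -\varepsilon e^{-\varepsilon N_i}$ equal to a common multiplier across all $i$, forcing all $N_i$ equal by injectivity of $\phi'$, with the same convexity argument certifying global optimality.
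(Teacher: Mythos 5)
Your proof is correct, but it takes a genuinely different route from the paper's. The paper formulates the same constrained problem (minimize $\sum_{i \in R} e^{-N_i}$ subject to $\sum_{i \in R} N_i = N$ and $N_i \geq 0$) and solves it with Lagrangian/KKT machinery: stationarity plus complementary slackness force $e^{-N_i} = \lambda$ at every coordinate with $N_i > 0$, hence all $N_i$ equal, with strict convexity invoked at the end to certify uniqueness. You bypass the multipliers entirely with one application of Jensen's inequality to the strictly convex map $\phi(x) = e^{-\varepsilon x}$: the lower bound $k\,\phi(N_\text{total}/k)$ is attained by the uniform allocation, and the equality condition for strictly convex functions delivers uniqueness in the same stroke. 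Your route is shorter and cleaner on one point the paper glosses over --- the KKT argument only works through the interior case $N_i > 0$ and leaves configurations with some $N_i = 0$ to be excluded implicitly by the final uniqueness appeal, whereas Jensen requires no case analysis at all. Your argument also dovetails naturally with the majorization remark (Karamata's inequality) that the paper places immediately after its proof, of which your convexity bound is the symmetric special case. Two minor observations: you correctly retain $\varepsilon$ in the exponent of the objective, while the paper's proof silently writes $e^{-N_i}$ in place of $e^{-\varepsilon N_i}$ (harmless for fixed $\varepsilon$, but your version is the faithful one); and your remark that $N_\text{total}/k$ is interior to the feasible set is superfluous, since attainment of the Jensen bound does not require interiority.
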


\begin{proof}
For a fixed $\varepsilon$, total population $N$, and number of regions $|R|$, minimizing the bias is equivalent to solving the following optimization problem:
\begin{subequations}
\begin{align} 
    \minimize_{N_i} & \sum_{i \in R} e^{-N_i} \\
    \text{s.t.} & \sum_{i \in R} N_i = N, \\
    & N_i \geq 0, \quad \forall i \in R.
\end{align}
\end{subequations}

The Lagrangian for this problem is given by:
\[
\mathcal{L}(N_i, \lambda, \mu_i) = \sum_{i \in R} e^{-N_i} + \lambda \left(\sum_{i \in R} N_i - N \right) - \sum_{i \in R} \mu_i N_i.
\]

The necessary KKT conditions are:

\begin{enumerate}
    \item \emph{Stationarity Condition}
    \[
    \frac{\partial \mathcal{L}}{\partial N_i} = -e^{-N_i} + \lambda - \mu_i = 0.
    \]
    Rearranging, we obtain:
    \[
    \lambda - \mu_i = e^{-N_i}.
    \]
    
    \item \emph{Primal Feasibility}
    \[
        \sum_{i \in R} N_i = N, \quad N_i \geq 0, \quad \forall i \in R.
    \]
    
    \item \emph{Dual Feasibility}
    \[
        \mu_i \geq 0, \quad \forall i \in R.
    \]
    
    \item \emph{Complementary Slackness}
    \[
        \mu_i N_i = 0, \quad \forall i \in R.
    \]
\end{enumerate}

If $N_i > 0$, then $\mu_i = 0$, so we obtain:
\[
  \lambda = \frac{1}{e^{N_i}}.
\]
Taking the natural logarithm on both sides:
\[
  N_i = -\ln \lambda.
\]

Using the primal feasibility, we have:
\[
    \sum_{i \in R} N_i = \sum_{i \in R} (-\ln \lambda) = |R| (-\ln \lambda) = N.
\]

Solving for $\lambda$:
\[
    \lambda = e^{-N/|R|}.
\]

Substituting back:
\[
    N_i^* = \frac{N}{|R|}, \quad \forall i \in R.
\]

Since the objective function is strictly convex over the feasible set, the optimal solution must be unique. This implies that the only possible minimizer is:
\[
    N_i^* = \frac{N}{|R|}, \quad \forall i \in R.
\]
Thus, the optimal allocation distributes $N$ evenly among all states  $i \in R$.
\end{proof}

This also implies that if two distributions of $N_i$, written in vector form as $\{N_i^B\}_{i \in R}$ and $\{N_i^A\}_{i \in R}$, satisfy $\{N_i^B\}$ \textit{majorizes} $\{N_i^A\}$, then the total bias is higher under $\{N_i^B\}$. Since the function $f(x) = e^{-x}$ is convex, Karamata's inequality gives
\[
\sum_{i \in R} f(N_i^A) \leq \sum_{i \in R} f(N_i^B),
\]
which implies greater cumulative bias in more skewed distributions.

\FloatBarrier
\section{Limitations}\label{limitation}

We discuss two primary limitations of the paper.

First, the optimization program relies on previously released data to determine the allocation of privacy budget. If level-wise privacy budgets are also made public, then the program~\eqref{eq:1} is not differentially private unless the input counts used for computing the allocation are themselves privatized. However, as discussed in Section~\ref{Exp_Section}, if only the total privacy budget is released—without revealing the per-level allocation—then it is acceptable to use ground-truth statistics for budget allocation.

Second, the hierarchical post-processing method introduced in Section~\ref{sec:hierarchical_post_processing} assumes that lower levels typically contribute more to the total MSE under optimal allocation. While this is generally true, the contribution can vary depending on the level-wise weights. In particular, under extreme weight settings (e.g., $w_3 \approx 0.9$ in Figure~\ref{fig:weights_experiment_56}), the upper level may contribute more to the total MSE. In such cases, a bottom-up projection approach—where higher-level counts are adjusted to match lower-level estimates—may be more appropriate than the top-down method used here.

\FloatBarrier
\section{Additional Experiments on Other States}\label{additional_experiments_State}

\FloatBarrier
\subsection{Additional Result on Section~\ref{optimized_alloc_bias_var_section}}

\begin{figure}[htbp]
    \centering

    \begin{subfigure}{\textwidth}
        \centering
        \includegraphics[width=0.95\textwidth]{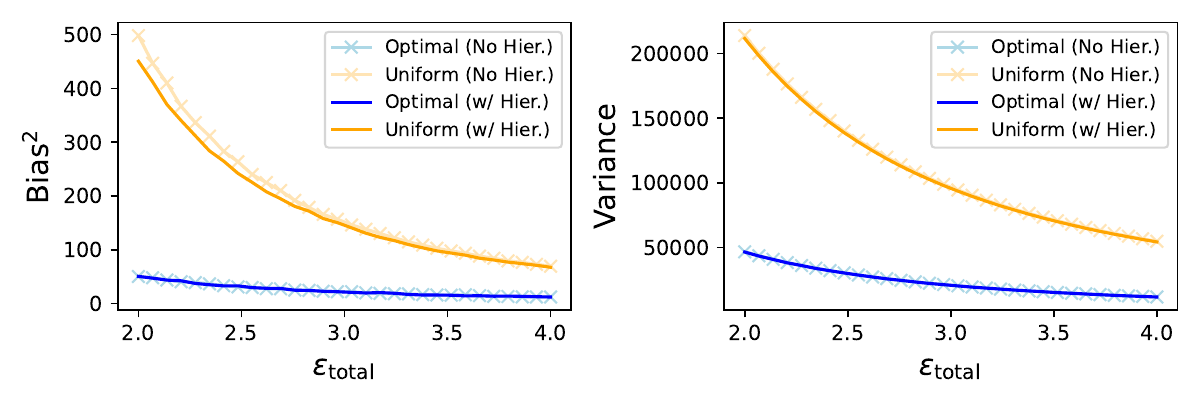}
        \vspace{-2em}
        \caption{New Mexico}
    \end{subfigure}

    \begin{subfigure}{\textwidth}
        \centering
        \includegraphics[width=0.95\textwidth]{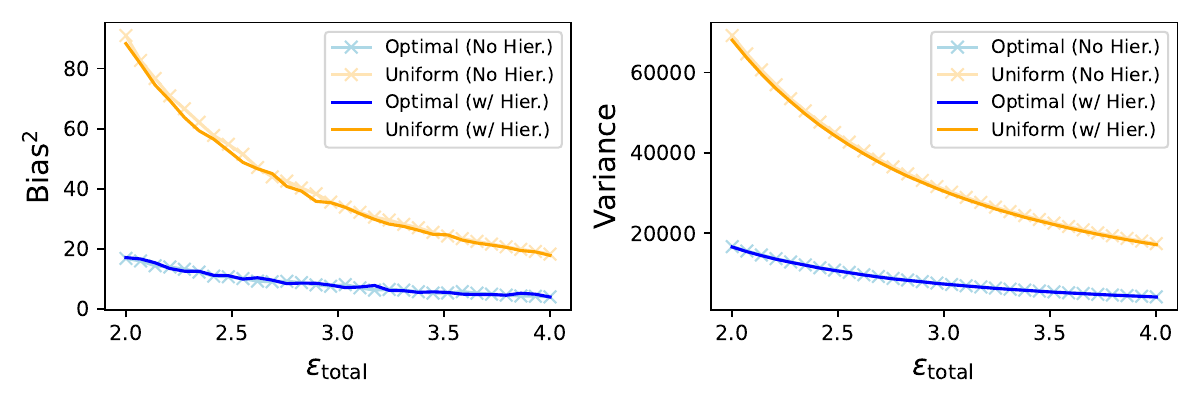}
        \vspace{-2em}
        \caption{Delaware}
    \end{subfigure}
    
    \caption{Hierarchical data release performance in \textbf{New Mexico} (top) and \textbf{Delaware} (bottom). Each row shows bias (left) and variance (right) for three-level hierarchical releases.}
    \label{fig:1st_3rd_appendix}
\end{figure}

\begin{figure}[htbp]
    \centering
    \begin{subfigure}{0.48\textwidth}
        \centering
        \includegraphics[width=\linewidth]{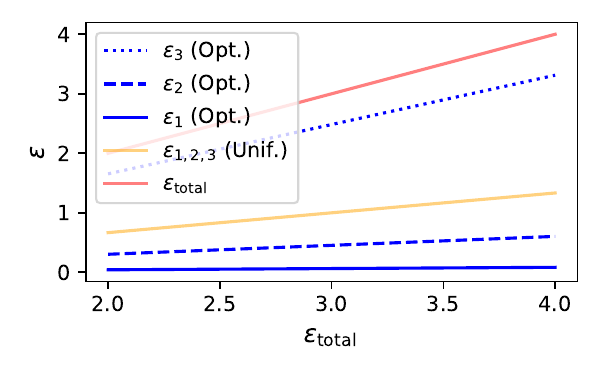}
        \vspace{-2em}
        \caption{New Mexico}
    \end{subfigure}
    \hfill
    \begin{subfigure}{0.48\textwidth}
        \centering
        \includegraphics[width=\linewidth]{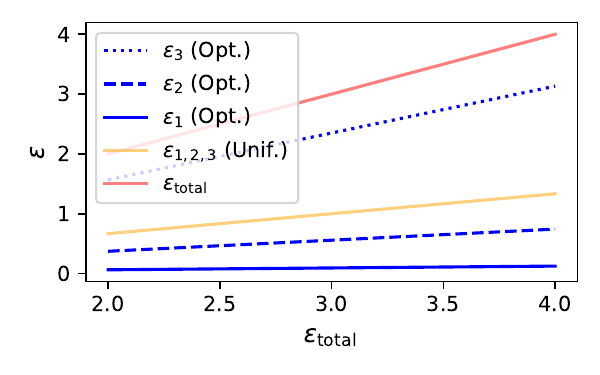}
        \vspace{-2em}
        \caption{Delaware}
    \end{subfigure}
    \caption{Bias$^2$ and variance for three weight functions $W(p)$—logarithmic ($\log(p + 1)$), linear ($p$), and quadratic ($p^2$)—in a single Census Tract in \textbf{New Mexico} (left) and \textbf{Delaware} (right).}
    \label{fig:allocation_35_107}
\end{figure}

\begin{figure}[htbp]
  \centering
  \begin{subfigure}{0.7\textwidth}
    \centering
    \includegraphics[width=\linewidth]{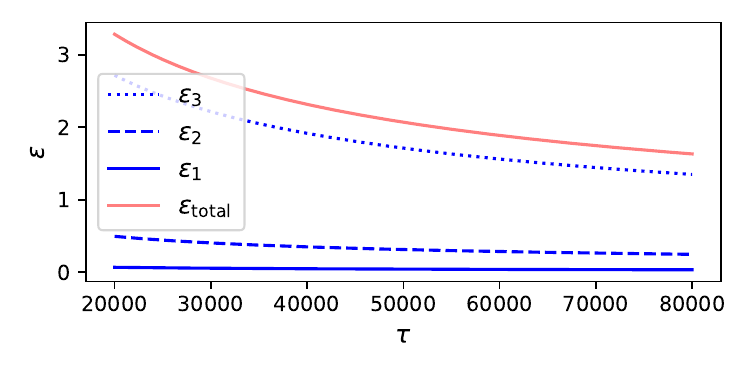}
    \vspace{-2em}
    \caption{New Mexico}
    \label{fig:opt_2nd_35}
  \end{subfigure}
  
  \begin{subfigure}{0.72\textwidth}
    \centering
    \includegraphics[width=\linewidth]{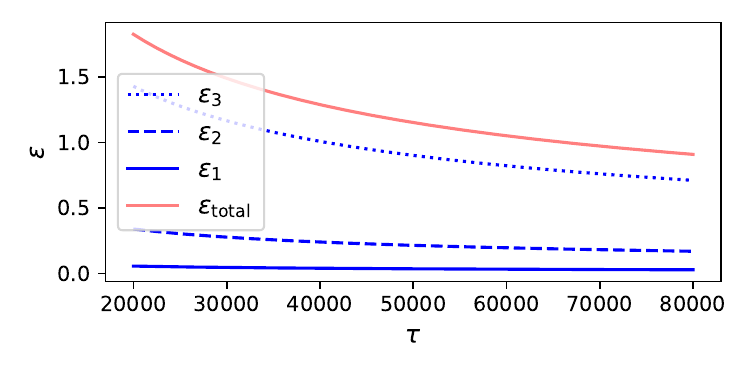}
    \vspace{-2em}
    \caption{Delaware}
    \label{fig:opt_2nd_10}
  \end{subfigure}
  
  \caption{Privacy budget allocation using Optimization Program~\eqref{eq:2} for \textbf{New Mexico} (top) and \textbf{Delaware} (bottom).}
  \label{fig:opt_2nd_combined}
\end{figure}

\FloatBarrier
\subsection{Additional Result on Section~\ref{Downstream_Section}}

\begin{figure}[htbp]
    \centering

    \begin{subfigure}{\textwidth}
        \centering
        \includegraphics[width=0.95\textwidth]{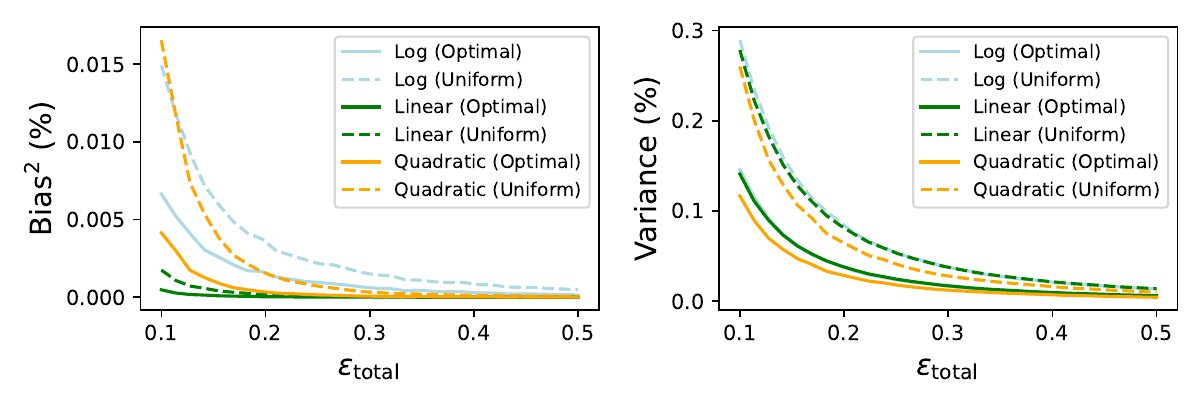}
        \vspace{-1em}
        \caption{New Mexico}
    \end{subfigure}

    \begin{subfigure}{\textwidth}
        \centering
        \includegraphics[width=0.95\textwidth]{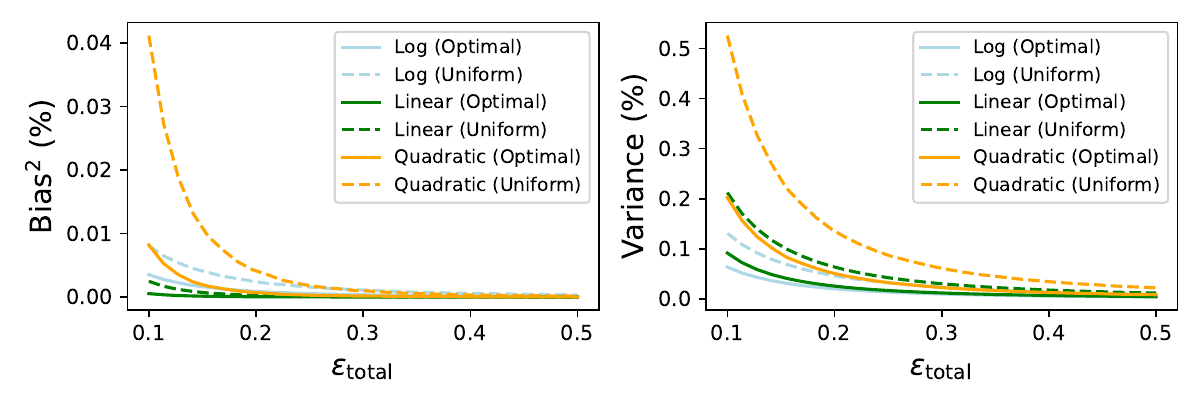}
        \vspace{-1em}
        \caption{Delaware}
    \end{subfigure}

    \caption{Bias$^2$ and variance for three weight functions $W(p)$—logarithmic ($\log(p+1)$), linear ($p$), and quadratic ($p^2$)—in a single Census Tract in \textbf{New Mexico} (top) and \textbf{Delaware} (bottom).}
    \label{fig:allocation_combined_states}
\end{figure}

\end{document}